\tikzset{
	semithick,
	node distance = 2cm,
	dot/.style={circle,fill,inner sep=2pt}
}
\tikzset{
	side by side/.style 2 args={
		line width=2pt,
		#1,
		postaction={
			clip,postaction={draw,#2}
		}
	}
}
\tikzstyle{every state}=[draw = black,thick,fill = white,minimum size = 4mm]
\tikzstyle{selected edge} = [draw,line width=2pt,-,red!50]
\tikzset{
	edge/.style={->,> = latex'}
}
\newcommand{\comment}[1]{}
\newcommand{\cm}{{\mathcal{M}}}
\newcommand{\cA}{{\mathcal{A}}}
\newcommand{\cI}{{\mathcal{I}}}
\newcommand{\cF}{{\mathcal{F}}}
\newcommand{\OPT}{\textnormal{OPT}}
\newcommand{\eps}{{\varepsilon}}
\newcommand{\floor}[1]{\left\lfloor #1 \right\rfloor}
\DeclareMathOperator*{\argmin}{arg\,min}
\begin{document}
\newtheorem{thm}{Theorem}[section]
\newtheorem{prop}[thm]{Proposition}
\newtheorem{assm}[thm]{Assumption}
\newtheorem{lem}[thm]{Lemma}
\newtheorem{obs}[thm]{Observation}
\newtheorem{cor}[thm]{Corollary}
 \newtheorem{lemma}[thm]{Lemma}
  \newtheorem{definition}[thm]{Definition}
 \newtheorem{theorem}[thm]{Theorem}
 \newtheorem{proposition}[thm]{Proposition}
 \newtheorem{claim}[thm]{Claim}
\newtheorem{defn}[thm]{Definition}
\newcommand{\ariel}[1]{{\color{red} (Ariel :#1)}}
\def \II   {{\mathcal I}}
\newcommand{\one}{\mathbbm{1}}
	\def\claimproof{\proof}
\def\endclaimproof{\hfill$\square$\\}
\renewcommand\qedsymbol{$\blacksquare$}

\title{
An FPTAS for Budgeted Laminar Matroid Independent Set
}
\author{Ilan Doron-Arad\thanks{Computer Science Department, 
	Technion, Haifa, Israel.\texttt{idoron-arad@cs.technion.ac.il}}
\and	
	Ariel Kulik\thanks{CISPA Helmholtz Center for Information Security, Germany. \texttt{ariel.kulik@cispa.de}} 
\and 
Hadas Shachnai\thanks{Computer Science Department, 
Technion, Haifa, Israel. \texttt{hadas@cs.technion.ac.il}}
}
\maketitle

\begin{abstract}
	
	We study the {\em budgeted laminar matroid independent set} problem. The input is a ground set, where each element has a cost and a non-negative profit, along with a laminar matroid over the elements and a budget. The goal is to select a maximum profit independent set of the matroid whose total cost is bounded by the budget. 
	Several well known special cases, where we have, e.g., no matroid constraint (the classic knapsack problem) or a uniform matroid constraint (knapsack with a cardinality constraint), admit a {\em fully polynomial-time approximation scheme (FPTAS)}. In contrast, the {\em budgeted matroid independent set (BMI)} problem with a general matroid has an {\em efficient polynomial-time approximation scheme (EPTAS)} but does not admit an FPTAS. 
	This implies an EPTAS for our problem, which is the best known result prior to this work.

	 We present an FPTAS for budgeted laminar matroid independent set, improving the previous EPTAS for this matroid family and generalizing the FPTAS known for knapsack with a cardinality constraint and multiple-choice knapsack. 
	 Our scheme is based on a simple {\em dynamic program}  
	  which utilizes the tree-like structure of laminar matroids.


\end{abstract}

\section{Introduction}
\label{sec:intro}

\comment{@book{martello1990knapsack,
		title={Knapsack problems: algorithms and computer implementations},
		author={Martello, Silvano and Toth, Paolo},
		year={1990},
		publisher={John Wiley \& Sons, Inc.}
	}
	
	@book{kellerer2004multidimensional,
		title={Multidimensional knapsack problems},
		author={Kellerer, Hans and Pferschy, Ulrich and Pisinger, David and Kellerer, Hans and Pferschy, Ulrich and Pisinger, David},
		year={2004},
		publisher={Springer}
	}
	
	@article{gilmore1966theory,
		title={The theory and computation of knapsack functions},
		author={Gilmore, PC and Gomory, Ralph E},
		journal={Operations Research},
		volume={14},
		number={6},
		pages={1045--1074},
		year={1966},
		publisher={INFORMS}
	}
	
	@article{cacchiani2022knapsack,
		title={Knapsack problems-An overview of recent advances. Part II: Multiple, multidimensional, and quadratic knapsack problems},
		author={Cacchiani, Valentina and Iori, Manuel and Locatelli, Alberto and Martello, Silvano},
		journal={Computers \& Operations Research},
		pages={105693},
		year={2022},
		publisher={Elsevier}
}}

Knapsack is one of the most fundamental problems in combinatorial optimization, which
has been continuously studied in the past half century.
 \cite{gilmore1966theory,martello1990knapsack,kellerer2004multidimensional,cacchiani2022knapsack}. Considerable attention was given to a generalization of knapsack including an additional {\em matroid constraint} \cite{DBLP:journals/cacm/Knuth74,caprara2000approximation,mastrolilli2006hybrid,li2022faster,sinha1979multiple,bansal2004improved,kellerer2004,doron2022eptas}. 
In this work we consider the knapsack problem with a {\em laminar matroid} constraint. 

\comment{These problems have numerous applications arising in, e.g., assortment planning \cite{desir2016assortment}, crowdsourcing \cite{wu2015hear}, multiprocessor task scheduling \cite{caprara2000approximation},  capital budgeting \cite{nauss19780}, and fault tolerance optimization \cite{sinha1979multiple}. 

Another interest in knapsack problems with additional constraints is as {\em separation oracles}, a crucial component in the ellipsoid method used in solving linear programs with a large number of constraints \cite{karmarkar1982efficient,fleischer2011tight,epstein2012bin}. Surprisingly, for slightly more general families of matroid constraints than discussed above, such as partition matroid constraints, we could not find any published work in the context of knapsack; we encountered this while writing a different paper, which required such a separation oracle \cite{doron2022bin}. 
}

A {\em matroid} is a set system $(S, \cI)$, where $S$ is a finite set and $\cI \subseteq 2^S$ such that (i) $\emptyset \in \cI$, (ii) for all $A \in \cI$ and $B \subseteq A$ it holds that $B \in \cI$, and (iii) for all $A,B \in \cI$ where $|A| > |B|$ there is $e \in A \setminus B$ such that $B \cup \{e\} \in \cI$. We focus on the family of {\em laminar matroids}, defined below. 
 \begin{definition}
	Given a finite set $S$, $\cF \subseteq 2^S \setminus \{\emptyset\}$ is a {\em laminar family} on $S$ if for any $X,Y \in \cF$ one of the following holds:  $X \cap Y  = \emptyset$, or $X \cap Y  = X$, or $X \cap Y  = Y$. 
\end{definition} 

\begin{definition}
Let $\cF$ be a laminar family on a finite set $S$;
also, let $k: \cF \rightarrow \mathbb{N}_{>0}$ and $\cI_{\cF,k} = \{A \subseteq S~:~|A \cap X| \leq k(X)~\forall X \in \cF\}$. Then $(S,\cI_{\cF,k})$ is a {\em laminar matroid}. 
\end{definition} 
\noindent The {\em independent sets} $\cI_{\cF,k}$ in the laminar matroid $(S,\cI_{\cF,k})$ are all subsets of elements $A$ such that for any set $X \in \cF$ in the laminar family, $|A \cap X|$ does not violate the cardinality constraint $k(X)$. 
It is well known (see, e.g., \cite{IW11,G01}) that laminar matroids are indeed matroids.

In this paper, we study the {\em budgeted laminar matroid independent set (BLM)} problem. The input is a tuple $I = (S,\cF,k,c,p,B)$, where $S$ is a finite set, $\cF$ is a laminar family on $S$ such that $S \in \cF$, $k:\cF \rightarrow \mathbb{N}_{>0}$ gives cardinality bounds to $\cF$, $c:S \rightarrow \mathbb{N}$ is a cost function, $p:S \rightarrow \mathbb{N}$ is a profit function, and $B \in \mathbb{N}$ is a budget.  A {\em solution} of $I$ is an independent set $T \in \cI_{\cF,k}$ of the laminar matroid $(S, \cI_{\cF,k})$ such that $c(T) = \sum_{e \in T} c(e) \leq B$. The goal is to find a solution $T$ of $I$ such that $p(T) =  \sum_{e \in T} p(e) $ is maximized. 

There are a few notable special cases of BLM; we use our notation to formally define them. Consider an instance $I$ of BLM. If $\cF \setminus \{S\}$ is a partition of $S$ and $k(S) \geq |S|$ then $(S, \cI_{\cF,k})$ is a {\em partition matroid}, and we say that $I$ is a {\em knapsack with partition matroid} instance.\footnote{
	Adding the constraint $k(S) \geq |S|$ is purely technical to instantiate a partition matroid using BLM notation.} Moreover, the special case of knapsack with partition matroid where $k(X) = 1~\forall X \in (\cF \setminus \{S\})$ is the {\em multiple-choice knapsack} \cite{sinha1979multiple}. Alternatively, if $\cF = \{S\}$ then $(S,\cI_{\cF,k})$ is a {\em uniform matroid}, and $I$ is a {\em cardinality constrained knapsack} instance \cite{DBLP:journals/cacm/Knuth74}. 

A natural application of BLM arises in the context of cloud computing, where limited network bandwidth limit plays a vital role (see, e.g.,  \cite{zhu2012towards,popa2013elasticswitch,popa2012faircloud}). Consider a network $G = (V,E)$, that is a directed tree with a root $M$ associated with a cloud computer. 
Each leaf in $G$ represents a {\em client}. 
Each client sends a job to $M$, which can process the job and broadcast the results through the network back to the client.  
The jobs have processing times and values. Also, $M$ has a bound on the total processing time of admitted jobs. Moreover, each node 
$v \in V$ in the network has a limited bandwidth, $k(v)$, so that $v$ can broadcast the results of at most $k(v)$ jobs to its descendant clients. The goal is to maximize the total value of complete jobs subject to the processing time and bandwidth bounds. 

We can cast this problem as a BLM instance by taking the set of elements $S$ to be the se of jobs, where the profit and cost of each element are the value and processing time of the corresponding job. Now, we define
 a laminar family $\cF$ on the set of elements, where for each vertex $v \in V$ there is a set $X_v \in \cF$ containing all jobss that belong to clients (i.e., leaves) in the subtree rooted by $v$, where the cardinality bound of $X_v$ is the bandwidth limit $k(X_v) = k(v)$. For other applications of laminar matroids, see e.g.,~\cite{G01,GMT15,IW11}.

We focus in this paper on approximation schemes for BLM. Let $\OPT(I)$ be the value of an optimal solution for an instance $I$ of a maximization problem $\Pi$. For $\alpha \in (0,1]$, we say that $\cA$ is an $\alpha$-approximation algorithm for $\Pi$ if, for any instance $I$ of $\Pi$,
$\cA$ outputs a solution of value at least $\alpha \cdot \OPT(I)$. A {\em polynomial-time approximation scheme} (PTAS)
for $\Pi$ is a family of algorithms $(A_{\eps})_{\eps>0}$ such that, for any $\eps>0$, $A_{\eps}$ is a polynomial-time $(1 - \eps)$-approximation algorithm for $\Pi$.
An {\em efficient PTAS} (EPTAS) is a PTAS $(A_{\eps})_{\eps>0}$ with running time of the form $f\left(\frac{1}{\eps}\right) \cdot |I|^{O(1)}$, where $f$ is an arbitrary computable function.
 The running time of an EPTAS might be impractically high; this motivates the study of the following important subclass of EPTAS: $(A_{\eps})_{\eps>0}$ is a {\em fully PTAS} (FPTAS) if the running time of $A_{\eps}$ is of the form ${\left(\frac{|I|}{\eps}\right)}^{O(1)}$. 
 
 \begin{figure}\hspace*{0.8cm}
 	\label{fig:diagram}
 	\begin{tikzpicture}[
 		terminal/.style={
 			rectangle,
 			minimum width=2cm,
 			minimum height=1cm,
 			very thick,
 			draw=black,
 			font=\itshape,
 		},
 		]
 		\matrix[row sep=0.7cm,column sep=-5cm] {%
 			&; \node [terminal](KP) {$0/1$-knapsack {\bf FPTAS}};  &\\
 			
 			\node [terminal](MCK) {Multiple-choice knapsack {\bf FPTAS}}; & & \node [terminal](kKP) {Cardinality constrained knapsack {\bf FPTAS}}; \\
 			&\node [terminal](PKP) {Knapsack with partition matroid {\bf FPTAS}}; &\\
 			&\node [terminal,draw=red](BLM) {Budgeted laminar matroid independent set (BLM) {\bf FPTAS} \textnormal {(this paper)}}; \\
 			& \node [terminal](BMI) {Budgeted matroid independent set {\bf EPTAS},  no {\bf FPTAS} unless P=NP}; \\
 		};
 		\draw (KP) edge [->,>=stealth,shorten <=2pt, shorten >=2pt, thick] (kKP);
 		\draw (KP) edge [->,>=stealth,shorten <=2pt, shorten >=2pt, thick] (MCK);
 		\draw (MCK) edge [->,>=stealth,shorten <=2pt, shorten >=2pt, thick] (PKP);
 		\draw (kKP) edge [->,>=stealth,shorten <=2pt, shorten >=2pt, thick] (PKP);
 		\draw (PKP) edge [->,>=stealth,shorten <=2pt, shorten >=2pt, thick] (BLM);
 		\draw (BLM) edge [->,>=stealth,shorten <=2pt, shorten >=2pt, thick] (BMI);
 	\end{tikzpicture}
 	\caption{Complexity of knapsack generalizations with various families of matroid constraints. }
 \end{figure}
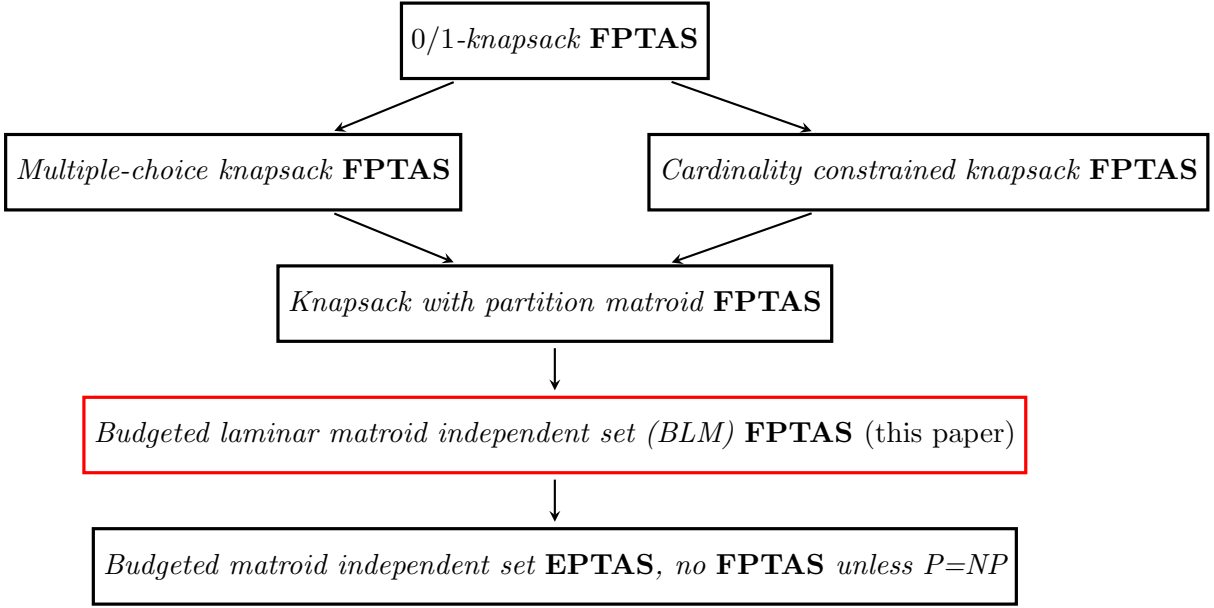
 
We note that budgeted independent set with a {\em general} matroid constraint admits an EPTAS; however, the existence of an FPTAS was ruled out \cite{MM2023}. In contrast, well known special cases of BLM such as cardinality constrained knapsack \cite{DBLP:journals/cacm/Knuth74} and multiple-choice knapsack~\cite{sinha1979multiple} admit FPTAS. The question whether 
budgeted independent set admits an FPTAS on other classes of matroids (e.g., laminar, graphic, or linear matroid) remained open. 

In this paper we resolve this open question for laminar matroids.
Our main result is an FPTAS for BLM, improving upon the existing EPTAS for this family, and generalizing the FPTAS for the special cases of cardinality constrained knapsack and multiple-choice knapsack. Specifically,  

\begin{theorem}
	\label{thm:main}
There is an algorithm \textnormal{\textsf{FPTAS}} that given a \textnormal{BLM} instance $I$ and $\eps>0$ finds in time $O\left( |I|^5 \cdot \eps^{-2} \right)$ a solution $T$ for $I$ of profit $p(T) \geq (1-\eps) \cdot \OPT(I)$.    
\end{theorem}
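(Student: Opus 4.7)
The plan is to combine standard profit scaling with a bottom-up dynamic program that exploits the tree structure of the laminar family $\cF$.

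\textbf{Scaling.} First I would compute a rough estimate $P$ of $\OPT(I)$ satisfying $P \le \OPT(I) \le n\cdot P$, for instance by taking $P$ to be the profit of the best feasible single element or by running a simple greedy. Then I would round profits via $\tilde p(e) := \lfloor p(e)/K \rfloor$ with $K := \eps P / n$, so that $K \tilde p(T) \le p(T) \le K\tilde p(T) + n K$ for every $T\subseteq S$ and the total scaled profit is bounded by $Q := O(n^2/\eps)$. Hence it suffices to maximize $\tilde p(T)$ exactly over feasible $T$.

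\textbf{Dynamic program.} View $\cF$ as a rooted tree with root $S$, where the children of $X$ are the maximal members of $\cF$ strictly contained in $X$, and each element $e$ is attached to the minimal $X(e) \in \cF$ containing it; let $E_X := \{e : X(e) = X\}$. The key structural fact that drives the recursion is that $T\subseteq X$ is independent in the matroid restricted to the subfamily under $X$ if and only if $|T\cap X| \le k(X)$ and, for every child $Y$ of $X$, $T\cap Y$ is independent in the matroid restricted to $Y$. I would then compute bottom-up
\[
c^{*}(X,j,q) \;=\; \min\bigl\{\, c(T) \,:\, T\subseteq X,\; T\in \cI_{\cF,k},\; |T|=j,\; \tilde p(T)=q \,\bigr\},
\]
for every $X\in \cF$, $0\le j\le k(X)$ and $0\le q\le Q$, with $+\infty$ on infeasible triples. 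At an internal $X$ the children's tables are merged one at a time by a $(j,q)$-convolution; the elements of $E_X$ are folded in one by one via a single-element $0/1$-knapsack update; and finally the table is truncated to $j\le k(X)$. At a leaf $X$ only the $E_X$-phase is used. The output profit is $\max\{Kq : \exists j,\ c^{*}(S,j,q)\le B\}$, and the corresponding $T$ is recovered through standard back-pointers.

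\textbf{Correctness and running time.} Correctness of the DP is immediate from the recursive characterization of independence above. The approximation guarantee is the usual FPTAS accounting: if $T^{*}$ is optimal and $T$ is the DP's output, then $p(T) \ge K\tilde p(T) \ge K\tilde p(T^{*}) \ge p(T^{*}) - nK = \OPT(I) - \eps P \ge (1-\eps)\OPT(I)$, using $P\le \OPT(I)$. The main obstacle I expect is bringing the running time down to $O(|I|^5\eps^{-2})$: the naive cost of each pairwise $(j,q)$-convolution is $\Theta(k_1 k_2 Q^2)$, and summed bluntly over the tree this overshoots. To meet the target I plan to invoke the classical tree-knapsack amortization along the $j$-axis, where each ordered pair of elements is charged only at its least common ancestor in $\cF$, yielding $\sum k_1 k_2 = O(|I|^2)$ across all merges; and to implement the $q$-direction via a linear min-plus scan combined with the $j\le k(X)$ truncation so that its amortized per-pair cost contracts from $Q^2$ to $O(Q)$. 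Multiplying these gives the claimed $O(|I|^2) \cdot O(Q) \cdot O(Q) = O(|I|^5 \eps^{-2})$ bound.
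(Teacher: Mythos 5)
Your overall scheme is the same as the paper's: round profits down by $\alpha=\Theta(\eps\cdot\OPT(I)/n)$ so that the relevant range of rounded profits has size $Q=O(n^2/\eps)$, then compute, bottom-up over the tree of the laminar family, the minimum cost of an independent set of each prescribed cardinality and rounded profit, merging tables by a $(j,q)$-convolution. The paper organizes the recursion by repeatedly splitting off a maximal set $X\in\cF\setminus\{S\}$ and convolving the tables for $I\cap X$ and $I\setminus X$, which is your sequential child-merge in disguise; your structural characterization of independence below a node matches the paper's Observation~\ref{obs:cap2}, and your rounding analysis $p(T)\ge K\tilde p(T)\ge K\tilde p(T^*)\ge \OPT(I)-nK\ge(1-\eps)\OPT(I)$ is exactly the paper's. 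So correctness and the approximation ratio are fine.

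The gap is in the running-time argument, specifically the claim that ``a linear min-plus scan combined with the $j\le k(X)$ truncation'' shrinks the profit-axis cost of each merge from $Q^2$ to $O(Q)$. The truncation $j\le k(X)$ prunes only the cardinality axis; in the profit axis you must still compute, for every output value $q$, the minimum of $A[\cdot][q_1]+B[\cdot][q-q_1]$ over all $q_1$, i.e., a genuine min-plus convolution of length-$\Theta(Q)$ arrays with arbitrary entries, for which no linear-time (indeed no truly subquadratic) method is known, and the tables here have no monotonicity or concavity in $q$ that would enable one. Your cardinality-axis amortization $\sum k_1k_2=O(n^2)$ is correct (and sharper than anything the paper does), but combined with the unavoidable quadratic profit-axis factor it yields roughly $O(n^2\cdot Q^2)=O(n^6\eps^{-2})$ by the route you describe, not $O(n^5\eps^{-2})$. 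This does not endanger the substance of the theorem---your algorithm is still a polynomial-time $(1-\eps)$-approximation, hence an FPTAS---but the stated exponent is not established by your argument. For what it is worth, the paper reaches $O(|I|^5\eps^{-2})$ not by any amortization but by the cruder bound $O(|S|^3|P_{\bar I}|^2)$ together with its estimate $|P_{\bar I}|=O(|S|\eps^{-1})$ (which is itself generous, since $|P_{\bar I}|=|S|\cdot\max_e\bar p(e)+1=\Theta(|S|^2/\eps)$); so either adopt that accounting explicitly or substantiate the profit-axis speedup, which as written you cannot.
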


We give in Figure~\ref{fig:diagram} a complexity overview of the related problems. To derive an FPTAS, we first find a pseudo-polynomial time algorithm for BLM. Our technique is based on {\em dynamic programming (DP)}, which exploits the tree-like structure of laminar matroids. More concretely, let 
  $I =  (S,\cF,k,c,p,B)$ be a BLM instance and $X \in \cF$ a {\em maximal set} in the laminar family, not contained in any other set except for $S$. We first 
  construct (recursively) DP tables 
  for elements contained in $X$ and separately for elements contained in $S \setminus X$. Then, we combine the two DP tables into a DP table for the original instance $I$.
  We rely on the key property that combining any two independent sets $T_1 \subseteq X$ and independent set $T_2 \subseteq S \setminus X$ yields an independent set $T_1 \cup T_2$ as long as the overall cardinality (i.e., $|T_1 \cup T_2|$) satisfies the cardinality bound of $S$. Finally, our FPTAS is obtained by standard rounding of the profits. We remark that we did not attempt to optimize the running time; instead, our goal is to obtain a simple FPTAS for the problem.   
\subsection{Related Work}

BLM is an immediate generalization of the classic $0/1$-knapsack problem. While the knapsack problem is known to be NP-hard, it admits an FPTAS. Other notable special cases of BLM that admit an FPTAS include cardinality constrained knapsack \cite{DBLP:journals/cacm/Knuth74,caprara2000approximation,mastrolilli2006hybrid,li2022faster} and multiple choice knapsack \cite{sinha1979multiple, bansal2004improved,kellerer2004}. 


Chakaravarthy et al. \cite{chakaravarthy2013knapsack} considered the {\em knapsack cover with a matroid constraint (KCM)} problem,  which is dual 
to budgeted matroid independent set. In this variant, we are given a matroid $\cm = (S,\cI)$, a cost function $c:S \rightarrow \mathbb{R}_{\geq 0}$, a size function $s:S \rightarrow \mathbb{R}_{\geq 0}$, and a demand $D \in \mathbb{R}_{\geq 0}$. The goal is to find $A \in \cI$ such that $s(A) = \sum_{e \in A} s(e) \geq D$ 
such that $c(A) = \sum_{e \in A} c(e)$ is minimized.
They obtain a PTAS for a general matroid $\cm$, and an FPTAS when $\cm$ is a partition matroid. The {\em constrained minimum spanning tree} problem is a special case of KCM where $(S,\cI)$ is a {\em graphic matroid}.\footnote{$(S,\cI)$ is a graphic matroid if there is a graph $G = (V,S)$ where $\cI$ contains all the subsets $T \subseteq S$ satisfying $G' = (V,T)$ is an acyclic graph.}  
The constrained minimum spanning tree problem admits an EPTAS \cite{hassin2004efficient} and an FPTAS which violates the budget constraint by a factor of $(1+\eps)$ \cite{hong2004fully}.

As BLM is a special case of the budgeted matroid independent set problem, a PTAS for the problem follows from the works of \cite{GZ10,chekuri2011multi,BBGS11}. An EPTAS for BMI
was recently presented in~\cite{doron2022eptas} and generalized for budgeted matroid intersection in \cite{doron2023eptas}. As shown in \cite{MM2023}, this is the best possible. 

\subsection{Organization of the Paper} 

Section~\ref{sec:prel} gives preliminary results and notations. In Section~\ref{sec:alg} we give a pseudo-polynomial time algorithm for BLM. 
Section~\ref{sec:FPTAS} presents our FPTAS. We conclude in Section~\ref{sec:discussion} with a discussion. 

\section{Preliminaries}
\label{sec:prel}

Given a set $S$, a function $f:S \rightarrow  \mathbb{N}$, and $R \subseteq S$, let $f(R) = \sum_{e \in R} f(e)$.  Also, we sometimes use $f$ to denote a restriction of $f$ to a subset of the domain $S' \subseteq S$. Let $I = (S,\cF,k,c,p,B)$ be a BLM instance; we use $\cI(I) = \cI_{\cF,k}$ to denote the independent sets of the instance. 
For some $G \subseteq S$ define $\cF_{\subseteq G} = \{Y \in \cF~|~ Y \subseteq G\}$ as all sets in $\cF$ that are contained in $G$. We define below operations on $I$, 
generating modified BLM instances. We say that a set $X \in \cF \setminus \{S\}$ is a {\em maximal set} if it is not contained in any other set in $\cF \setminus \{S\}$.  For a maximal set $X \in \cF \setminus \{S\}$, define the BLM instances $I \cap X = (X, \cF_{\subseteq X},k,c,p,B)$ 
and $I \setminus X = \big(S \setminus X, \cF_{\subseteq (S \setminus X)} \cup \{S \setminus X\},\bar{k},c,p,B\big)$ such that for all $Y \in  \cF_{\subseteq (S \setminus X)} \cup \{S \setminus X\}$ 
 \begin{equation*}
	\bar{k}(Y) =
	\begin{cases}
	k(Y) ~~ &  Y \in \cF \\
		k(S), ~~& \textnormal{otherwise}.
	\end{cases}
\end{equation*} Note that the only set $Y$ that potentially does not belong to $\cF$ is $Y = S \setminus X$. Observe that $I \cap X$ and $I \setminus X$ can be viewed as restrictions of $I$ to $X$ and $S \setminus X$, respectively. The next observations follow from the above definitions and the properties of laminar families. 
\begin{obs}
	\label{obs:cap}
	For a \textnormal{BLM} instance $I = (S,\cF,k,c,p,B)$ and a maximal set $X \in (\cF \setminus \{S\})$ it holds that $I \cap X$ and $I \setminus X$ are \textnormal{BLM} instances. 
\end{obs}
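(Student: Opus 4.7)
The plan is to verify each clause of the BLM instance definition for both $I \cap X$ and $I \setminus X$: a finite nonempty ground set, a laminar family on that ground set which contains it as a member, a cardinality function into $\mathbb{N}_{>0}$, and the remaining numeric data $(c,p,B)$ which are inherited unchanged (and so cause no trouble).

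For $I \cap X$, I would first observe that $X \in \cF$ forces $X$ to be a finite nonempty subset of $S$ (since $\cF \subseteq 2^S \setminus \{\emptyset\}$), and that $\cF_{\subseteq X}$ is laminar on $X$ because any two of its members are also members of $\cF$, which is laminar on $S$; the three-way case dichotomy in the definition of a laminar family then transfers verbatim. Membership $X \in \cF_{\subseteq X}$ is immediate from $X \in \cF$ and $X \subseteq X$. The function $k$, when restricted to $\cF_{\subseteq X}$, still takes values in $\mathbb{N}_{>0}$.

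For $I \setminus X$, the hypothesis $X \in \cF \setminus \{S\}$ already gives $X \subsetneq S$, hence $S \setminus X$ is nonempty. To check that $\cF' := \cF_{\subseteq (S \setminus X)} \cup \{S \setminus X\}$ is a laminar family on $S \setminus X$, I would take an arbitrary pair $Y_1, Y_2 \in \cF'$: if both are in $\cF_{\subseteq (S \setminus X)}$, laminarity is inherited from $\cF$; otherwise one of them equals $S \setminus X$, which contains the other, so their intersection is just that other set. The top element $S \setminus X$ lies in $\cF'$ by construction. For $\bar k$, note that every member of $\cF'$ except possibly $S \setminus X$ already lies in $\cF$, so the two-branch definition covers all of $\cF'$ and assigns either $k(Y) \in \mathbb{N}_{>0}$ or $k(S) \in \mathbb{N}_{>0}$.

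I do not expect any real obstacle — the argument is a direct unpacking of definitions, with the only nontrivial micro-step being the laminarity check in $I \setminus X$ when one of the compared sets is the newly adjoined $S \setminus X$, which is disposed of by observing that $S \setminus X$ contains every other set of $\cF'$.
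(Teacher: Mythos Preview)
Your proposal is correct and is precisely the routine definition-check one would expect; the paper itself gives no proof beyond the remark that the observation ``follows from the above definitions and the properties of laminar families,'' so there is no alternative approach to compare against. Note, incidentally, that your argument never invokes the \emph{maximality} of $X$ (only that $X \in \cF \setminus \{S\}$), which is fine: maximality is needed for Observation~\ref{obs:cap2}, not here.
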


We use $\uplus$ to denote disjoint union. 
\begin{obs}
	\label{obs:cap2}
	For a \textnormal{BLM} instance $I = (S,\cF,k,c,p,B)$ and a maximal set $X \in (\cF \setminus \{S\})$ the following holds. 
	\begin{enumerate}
		\item For any $ Q_1 \in \cI(I \cap X)$ and  $Q_2 \in \cI(I \setminus X)$ such that $|Q_1 \uplus Q_2| \leq k(S)$ it holds that $Q_1 \uplus Q_2 \in \cI(I)$. 
		\item For any $Q \in \cI(I)$ it holds that $Q \cap X \in \cI(I \cap X)$ and $Q \setminus X \in  \cI(I \setminus X)$. 
	\end{enumerate} 
	
\end{obs}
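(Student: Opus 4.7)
The plan is to prove both parts by a case analysis on where an arbitrary set $Y \in \cF$ can lie relative to the maximal set $X$, using only the laminar condition. The key structural fact I will invoke throughout is this: since $X$ is maximal in $\cF \setminus \{S\}$, for every $Y \in \cF$ the laminar property forces exactly one of $Y = S$, $Y \subseteq X$ (so $Y \in \cF_{\subseteq X}$), or $Y \cap X = \emptyset$ (so $Y \in \cF_{\subseteq (S\setminus X)}$). Indeed, laminarity says $Y$ and $X$ are nested or disjoint; nesting $X \subsetneq Y$ with $Y \neq S$ would contradict maximality of $X$, so the only remaining possibilities are the three listed.

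For Part~1, set $Q = Q_1 \uplus Q_2$ with $Q_1 \subseteq X$ and $Q_2 \subseteq S \setminus X$. I need to verify $|Q \cap Y| \leq k(Y)$ for every $Y \in \cF$. If $Y = S$, this is exactly the hypothesis $|Q_1 \uplus Q_2| \leq k(S)$. If $Y \in \cF_{\subseteq X}$, then $Q_2 \cap Y = \emptyset$ (since $Y \subseteq X$ and $Q_2 \subseteq S \setminus X$), so $Q \cap Y = Q_1 \cap Y$ and the bound follows from $Q_1 \in \cI(I \cap X)$. If $Y \in \cF_{\subseteq (S\setminus X)}$, symmetrically $Q_1 \cap Y = \emptyset$, so $Q \cap Y = Q_2 \cap Y$ and $|Q_2 \cap Y| \leq \bar k(Y) = k(Y)$ by $Q_2 \in \cI(I \setminus X)$, noting that for $Y \in \cF$ the definition gives $\bar k(Y) = k(Y)$.

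For Part~2, I take $Q \in \cI(I)$ and verify the two independence claims directly. For $Q \cap X \in \cI(I \cap X)$: pick any $Y \in \cF_{\subseteq X}$; then $(Q \cap X) \cap Y = Q \cap Y$ (since $Y \subseteq X$), and $|Q \cap Y| \leq k(Y)$ because $Y \in \cF$ and $Q \in \cI(I)$. For $Q \setminus X \in \cI(I \setminus X)$: any constraint set is either a $Y \in \cF_{\subseteq (S\setminus X)}$, in which case $(Q\setminus X) \cap Y = Q \cap Y$ and the bound $k(Y) = \bar k(Y)$ comes from $Q \in \cI(I)$; or it is the extra set $S \setminus X$, for which $\bar k(S \setminus X) = k(S)$ and $|(Q \setminus X) \cap (S \setminus X)| = |Q \setminus X| \leq |Q| = |Q \cap S| \leq k(S)$, using that $S \in \cF$ for $I$.

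I do not expect a real obstacle here; both parts are bookkeeping once the three-case structural dichotomy above is in place. The only mild subtlety is remembering that the ``extra'' set $S \setminus X$ added to the laminar family of $I \setminus X$ need not lie in the original $\cF$, which is precisely why $\bar k$ is defined piecewise and why the $Y = S \setminus X$ case in Part~2 is handled separately via the global bound $k(S)$.
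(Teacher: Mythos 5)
Your proof is correct and complete; the paper itself offers no proof of this observation, stating only that it ``follows from the above definitions and the properties of laminar families,'' and the case analysis you give (every $Y \in \cF$ is either $S$, contained in $X$, or disjoint from $X$, by laminarity plus maximality of $X$) is exactly the intended argument. Your handling of the extra constraint set $S \setminus X$ via $\bar{k}(S\setminus X) = k(S)$, including the remark that this set may or may not already lie in $\cF$, correctly covers the one point the paper's definition of $\bar{k}$ is designed to address.
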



\comment{

\section{The Leaf Dynamic program}
\label{sec:leaf}

We now define a subroutine that deals with the leaves of the tree form of a laminar matroid. Let $\ell$ be a set of elements, and let $c,p: \ell \rightarrow  \mathbb{N}$ be a cost function and a profit function, respectively. We assume that these values are fixed throughout this section. 
Let $P_{\ell} = \{0,1,\ldots, |\ell| \cdot  \max_{e \in \ell} p(e)\}$ be all non-negative integers up to a well known upper bound on the profit of any subset of elements in $\ell$. Also, for $i,q \in \{0,1,\ldots, |\ell|\}$, and $t \in P_{\ell}$ let $H(i,q,t) = \left\{G \subseteq \ell_i~|~|G| = q, p(G) = t\right\}$ be all subsets of cardinality $q$ and profit $t$ among the first $i$ elements in $\ell$. The following definition gives the formula for the value of the leaf dynamic program w.r.t. $H(i,q,t)$. 

\begin{definition}
	\label{def:leaf}
	Given $i,q \in \{0,1,\ldots, |\ell|\}$, and $t \in P_{\ell}$, define 
	$\textnormal{\textsf{leaf}}_{\ell}(i,q,t) = \min_{Q \in H(i,q,t)} c(Q).$
\end{definition}
 
The remaining of this section is dedicated to finding an efficient alternative formula to compute $\textnormal{\textsf{leaf}}_{\ell}(i,q,t)$; this will use recursion. We start with the base case, i.e., for extremal values of $i,q,t$. First, the value of $\textnormal{\textsf{leaf}}_{\ell}(i,q,t)$ is trivially zero for taking zero elements; moreover, the value is $\infty$ (i.e., impossible) to take $q$ elements among the first $i<q$ elements; finally, taking strictly positive profit is impossible with zero elements. This is formalized in the next lemma. 
\begin{lemma}
	\label{lem:leaf-base}
Given 
$i,q \in \{0,1,\ldots, |\ell|\}$, and $t \in P_{\ell}$, then following holds. \begin{enumerate}
			\item If $i = 0$, $q = 0$, and $t = 0$, then $\textnormal{\textsf{leaf}}_{\ell}(i,q,t) = 0$.\label{cond:leaf-base1}
		\item If $i<q$, then $\textnormal{\textsf{leaf}}_{\ell}(i,q,t) = \infty$.\label{cond:leaf-base2}
				\item If $i = 0$, $q = 0$, and $t>0$, then $\textnormal{\textsf{leaf}}_{\ell}(i,q,t) = \infty$.\label{cond:leaf-base3}
	\end{enumerate}
\end{lemma}
\begin{proof}
Statement~\ref{cond:leaf-base1} follows since the empty set has profit and cost zero and it thus satisfies the conditions by Definition~\ref{def:leaf}. Statement~\ref{cond:leaf-base2} follows because we cannot choose $q$ elements from a set of strictly less than $q$ elements. Finally, the last statement follows since the only set that can be chosen for $i = q = 0$ is $\emptyset$, but $p(\emptyset) = 0 < t$. 
\end{proof}

For the step of the recursive formula of $\textnormal{\textsf{leaf}}_{\ell}(i,q,t)$, we consider two cases. First, the $i$-th element in $\ell$ (i.e., $\ell_i$) is not taken; then, to compute the correct formula we must choose the optimal subset of elements from $H(i-1,q,t)$. Second, $\ell_i$ is taken (only when $p(\ell_i) \leq t$); then, to find $\textnormal{\textsf{leaf}}_{\ell}(i,q,t)$ we must compute an optimal subset from $H(i-1,q-1,t-p(\ell_i))$ as $\ell_i$ belongs to the optimal subset. 
\begin{lemma}
	\label{lem:leaf-step}
	 Let $i,q \in \{1,\ldots, |\ell|\}$, and $t \in P_{\ell}$ such that $ \textnormal{\textsf{leaf}}_{\ell}(i,q,t) \neq \infty$. Then, it holds that \begin{equation*}
	 \textnormal{\textsf{leaf}}_{\ell}(i,q,t) =
	 \begin{cases}
	 	\min \bigg\{  \textnormal{\textsf{leaf}}_{\ell}(i-1,q,t), ~ \textnormal{\textsf{leaf}}_{\ell}\left(i-1,q-1,t-p(\ell_i)\right)+c(\ell_i) \bigg\},~~ & \textnormal{if}\ p(\ell_i) \leq t \\
	 	 \textnormal{\textsf{leaf}}_{\ell}(i-1,q,t) , ~~& \textnormal{otherwise}
	 \end{cases}
	 \end{equation*}
\end{lemma}
\begin{proof}
	Because $ \textnormal{\textsf{leaf}}_{\ell}(i,q,t) \neq \infty$, then by Definition~\ref{def:leaf} there is $Q^* \in H(i,q,t)$ such that $ \textnormal{\textsf{leaf}}_{\ell}(i,q,t) = c(Q^*)$. We consider two cases. \begin{itemize}
		\item $\ell_i \in Q^*$. It follows that $p(\ell_i) \leq t$ by Definition~\ref{def:leaf}. Then,  \begin{equation}
			\label{eq:leaf1}
			\begin{aligned}
					\textsf{leaf}_{\ell}(i,q,t) ={} & c(Q^*) 
					= c(Q^* \setminus \{\ell_i\})+c(\ell_i) \\
					 \geq{} & \min_{Q \in H(i-1,q-1,t-p(\ell_i))} c(Q)+c(\ell_i) 
					  = \textsf{leaf}_{\ell}(i-1,q-1,t-p(\ell_i))+c(\ell_i).
			\end{aligned}
		\end{equation} The inequality holds since $c(Q^* \setminus \{\ell_i\}) \in H(i-1,q-1,t-p(\ell_i))$. On the other hand,   \begin{equation}
		\label{eq:leaf2}
		\begin{aligned}
			\textsf{leaf}_{\ell}(i,q,t) ={} & \min_{Q \in H(i,q,t)} c(Q) 
			 \leq \min_{Q \in H(i-1,q-1,t-p(\ell_i))} c(Q)+c(\ell_i)\\
			 ={} &  \textsf{leaf}_{\ell}(i-1,q-1,t-p(\ell_i))+c(\ell_i).
		\end{aligned}
	\end{equation} The inequality holds since for all $Q \in H(i-1,q-1, t-p(\ell_i))$ it holds that $Q\cup \{\ell_i\} \in H(i,q,t)$. By \eqref{eq:leaf1} and \eqref{eq:leaf2},  in case that  $\ell_i \in Q^*$ it holds that \begin{equation}
	\label{eq:leaf5}
		\textsf{leaf}_{\ell}(i,q,t) = \textsf{leaf}_{\ell}(i-1,q-1,t-p(\ell_i))+c(\ell_i).
\end{equation}
		
		\item $\ell_i \notin Q^*$. Then,  \begin{equation}
			\label{eq:leaf3}
			\begin{aligned}
				\textsf{leaf}_{\ell}(i,q,t) ={} & c(Q^*) 
				\geq \min_{Q \in H(i-1,q,t)} c(Q)
				= \textsf{leaf}_{\ell}(i-1,q,t).
			\end{aligned}
		\end{equation} The inequality holds since $c(Q^*) \in H(i-1,q,t)$ in case that $\ell_i \notin Q^*$. On the other hand,   \begin{equation}
			\label{eq:leaf4}
			\begin{aligned}
				\textsf{leaf}_{\ell}(i,q,t) ={} & \min_{Q \in H(i,q,t)} c(Q) 
				\leq \min_{Q \in H(i-1,q,t)} c(Q) 
				=  \textsf{leaf}_{\ell}(i-1,q,t).
			\end{aligned}
		\end{equation} The inequality holds since for all $Q \in H(i-1,q,t)$ it holds that $Q \in H(i,q,t)$. By \eqref{eq:leaf3} and \eqref{eq:leaf4}, in case that  $\ell_i \notin Q^*$ it holds that \begin{equation}
		\label{eq:leaf6}
		\textsf{leaf}_{\ell}(i,q,t) = \textsf{leaf}_{\ell}(i-1,q,t-p(\ell_i)).
	\end{equation} 
	\end{itemize} Since $\ell_i \in Q^*$ and $\ell_i \notin Q^*$ are complementary, the statement of the lemma follows by \eqref{eq:leaf5} and \eqref{eq:leaf6}. \end{proof}

The recursive formula of $\textnormal{\textsf{leaf}}_{\ell}(i,q,t)$ provides an efficient algorithm for computing the formula. We start by computing the base cases of the formula, relying on Lemma~\ref{lem:leaf-base}; then, we use the recursive formula from Lemma~\ref{lem:leaf-step} to gradually compute the remaining entries within the constructed table. We give the pseudocode in Algorithm~\ref{alg:leaf} and the results of the algorithm are summarized in Lemma~\ref{lem:main-leaf}. 

 \begin{algorithm}[h]
	\caption{$\textsf{DP-Leaf}(\ell,c,t)$}
	\label{alg:leaf}
	\SetKwInOut{Input}{input}
	\SetKwInOut{Output}{output}
	
	
	
		\Input{$\ell$: set of elements, $c,p:S \rightarrow \mathbb{N}$: cost and profit functions.}
	
	\Output{The table $\textsf{leaf}_{\ell}$ defined in Definition~\ref{def:leaf}. }
	
	Initialize a table $\textsf{leaf}_\ell{}$ with $\infty$, with entries for every $i,q \in  \{0,1,\ldots, |\ell|\}$ and $t \in P_{\ell}$. 
	
 Set $\textnormal{\textsf{leaf}}_{\ell}(0,0,0) \leftarrow 0$.\label{step:s1}
 
 Set $\textnormal{\textsf{leaf}}_{\ell}(0,0,t) \leftarrow \infty$ for all $t \in P_{\ell} \setminus \{0\}$.\label{step:s2}
 
 Set $\textnormal{\textsf{leaf}}_{\ell}(i,q,t) \leftarrow \infty$ for all $i,q \in \{0,1,\ldots, |\ell|\}, t \in P_{\ell}$ such that $i< q$.\label{step:s3}
	
\For{$i \in \{1,\ldots, |\ell|\}$}{

\For{$q \in \{1,\ldots, |\ell|\}$}{
	
	\For{$t \in P_{\ell}$}{
		
		\eIf{$p(\ell_i) \leq p$}{
		
		 $\textnormal{\textsf{leaf}}_{\ell}(i,q,t) \leftarrow \min \bigg\{  \textnormal{\textsf{leaf}}_{\ell}(i-1,q,t), ~ \textnormal{\textsf{leaf}}_{\ell}\left(i-1,q-1,t-p(\ell_i)\right)+c(\ell_i) \bigg\}$

}{

 $\textnormal{\textsf{leaf}}_{\ell}(i,q,t) \leftarrow \textnormal{\textsf{leaf}}_{\ell}(i-1,q,t)$

}

	}

}

}
		
Return $\textsf{leaf}_{\ell}$. 

	\end{algorithm}

	\begin{lemma}
	\label{lem:main-leaf}
	Algorithm~\ref{alg:leaf} returns in time $O(|\ell|^2 \cdot P_{\ell})$ the table $\textnormal{\textsf{leaf}}_{\ell}$. 
\end{lemma}

\begin{proof}
	The correctness of the values $\textnormal{\textsf{leaf}}_{\ell}(0,0,0)$, $\textnormal{\textsf{leaf}}_{\ell}(0,0,t)$ for all $t \in P_{\ell} \setminus \{0\}$, and $\textnormal{\textsf{leaf}}_{\ell}(i,q,t)$ for all $i,q \in \{0,1,\ldots, |\ell|\}, i<q, t \in P_{\ell}$ (Computed in Steps~\ref{step:s1},~\ref{step:s2},~\ref{step:s3} of the algorithm) follows by Lemma~\ref{lem:leaf-base}. For any other values $i,q \in \{0,1,\ldots,|\ell|\}$, and $t \in P_{\ell}$ it holds that the algorithm computes the right value for $\textsf{leaf}_{\ell}(i,q,t)$ by Lemma~\ref{lem:leaf-step}. Since there are two loops over $\{1,\ldots, |\ell|\}$, one loop with $P_{\ell}$ iterations, and each iteration takes a constant time, the running time of the algorithm is $O(|\ell|^2 \cdot P_{\ell})$. 
\end{proof}

\section{The Tree Dynamic Program}
\label{sec:tree}

We define a dynamic program that computes the most profitable solution given a tree form of a given instance. Specifically, let $I = (S,\cF,k,c,p,B)$ be a BLM instance and let $N = (T,Z,d)$, $T = (V,E)$, be a tree form of $I$. We assume that these variables are fixed throughout this section. Let $P_{N} = \{0,1,\ldots, |S| \cdot  \max_{e \in S} p(e)\}$ be all non-negative integers up to an upper bound on the profit of any subset of elements in $S$. Also, for $v \in V$ let $\Delta(v)$ be the vertices of the subtree of $T$ with root $v$ (recall that $T$ is directed); finally, for $v \in V$, $q \in \{0,1,\ldots, |S|\}$, and $t \in P_{N}$ let \begin{equation}
	\label{eq:R}
	R_v(q,t) = \left\{G \subseteq \bigcup_{u \in \Delta(v)} Z(u)~\Bigg|~|G| = q, p(G) = t, |G \cap Z(v)| \leq d(u)~\forall u \in \Delta(v)\right\}
\end{equation} be all subsets of elements corresponding to vertices within the subtree of $v$ (i.e., $\Delta(v)$) of cardinality $q$, profit $t$, which satisfy the cardinality bounds of all vertices in $\Delta(v)$. Intuitively, $R_v(q,t)$ describes the set of feasible solutions within $\Delta(v)$, with specified parameters for cardinality and profit. Now, we define the optimal set, w.r.t. cost, within $R_v(q,t)$: 

\begin{definition}
	\label{def:vertex}
	for $v \in V$, $q \in \{0,1,\ldots, |Z(v)|\}$, and $t \in P_{N}$ define 
	$\textnormal{\textsf{tree}}_v(q,t) = \min_{Q \in R_v(q,t)} c(Q).$
\end{definition}

Next, we describe a bottom-up approach to compute the table $\textsf{tree}_v$, for every $v \in V$. The base case relates to the leaves of the tree $T$. Consider some leaf $v \in V$ and let $\ell = Z(v)$; obviously, $\textnormal{\textsf{tree}}_{\ell}(q,t)$ is infeasible ($ = \infty$) if the cardinality $q$ is more than the bound $d(v)$ (see \eqref{eq:R}). Otherwise, the formula of $\textsf{tree}_v$ converges with a corresponding entry in the table $\textsf{leaf}_{\ell}$, which is computed in the previous section. Specifically, 
\begin{lemma}
	\label{lem:tree-base}
	Given 
	a leaf $v \in V$, $q \in \{0,1,\ldots, |Z(v)|\}$, and $t \in P_{N}$, the following holds. \begin{enumerate}
		\item If $q > d(v)$, then $\textnormal{\textsf{tree}}_v(q,t) = \infty$.\label{cond:tree-base1}
		\item Otherwise, it holds that $\textnormal{\textsf{tree}}_v(q,t) = \textnormal{\textsf{leaf}}_{\ell}\left(|\ell|,q,t\right)$, where $\ell = Z(v)$. 
	\end{enumerate}
\end{lemma}

\begin{proof}
	By \eqref{eq:R} and Definition~\ref{def:vertex}, if $q>d(v)$ then $\textnormal{\textsf{tree}}_v(q,t) = \infty$ since each subset violates the cardinality bound of $v$. Otherwise, again by \eqref{eq:R} and Definition~\ref{def:vertex}, it holds that $\textnormal{\textsf{tree}}_v(q,t)$ is the minimum cost of some $Q \subseteq Z(v), |Q| = q, p(Q) = t$. Therefore, by Definition~\ref{def:leaf} it follows that $\textnormal{\textsf{tree}}_v(q,t) = \textnormal{\textsf{leaf}}_{\ell}\left(|\ell|,q,t\right)$, where $\ell = Z(v)$. 
\end{proof}

 We now describe the recursive formula of the table $\textsf{tree}_v$ for a non-leaf vertex $v \in V$. Consider the two childes $u,w$ of $v$ in $T$. Recall that $T$ is a tree form of the instance; thus, the elements sets $Z(u), Z(w)$ must be disjoint. Therefore, the recursive formula is achieved by finding the minimum cost of total cardinality $q$ and profit $t$ over a disjoint union over a subset from $Z(u)$ and a subset of $Z(w)$. These optimal subsets of $u$ and $w$ can be computed recursively, in their corresponding subtrees. For an example, see Figure~\ref{fig:f}. Formally, given $q,t \in \mathbb{N}$, let $J(q,t) = \{(q_1,q_2,t_1,t_2) \in \mathbb{N}^4~|~q_1+q_2 = q, t_1+t_2 = t\}$; this is the set of all partitions of $q$ and $t$ to two positive integers, summing to $q$ and $t$, respectively. Then,

	\begin{lemma}
	\label{lem:tree:step}
	Given 
	 $u,v,w \in V$ such that $(v,u), (v,w) \in E$, $q \in \{0,1,\ldots, |Z(v)|\}$, and $t \in P_{N}$, then \begin{equation*}
	 \textnormal{\textsf{tree}}_v(q,t) =
	 	\begin{cases}
	 		\min_{(q_u,q_w,t_u,t_w) \in J(q,t)}  \left\{ \textnormal{\textsf{tree}}_u(q_u,t_u) + \textnormal{\textsf{tree}}_w(q_w,t_w) \right\}, ~~ & \textnormal{if}\ q \leq d(v) \\
	 		\infty, ~~& \textnormal{otherwise}
	 	\end{cases}
	 \end{equation*}
\end{lemma}

\begin{proof}
	If $q > d(v)$ then by \eqref{eq:R} and Definition~\ref{def:vertex} we have that $\textnormal{\textsf{tree}}_v(q,t) = \infty$. Otherwise, \begin{equation*}
		\label{eq:v1}
		\begin{aligned}
			 \textnormal{\textsf{tree}}_v(q,t) ={} &  \min_{Q \in R_v(q,t)} c(Q) \\ 
			 ={} & \min_{(q_u,q_w,t_u,t_w) \in J(q,t)} \left\{ \min_{Q_u \in R_u(q_u,t_u)} c(Q_u)+ \min_{Q_w \in R_w(q_w,t_w)} c(Q_w) \right\} \\
			 ={} & \min_{(q_u,q_w,t_u,t_w) \in J(q,t)}   \textnormal{\textsf{tree}}_u(q_u,t_u) + \textnormal{\textsf{tree}}_w(q_w,t_w). 
		\end{aligned}
	\end{equation*} The first equality and the last equality follow by Definition~\ref{def:vertex}. The second equality follows since $Z(u) \cap Z(w) = \emptyset$ and $Z(v) = Z(u) \uplus Z(w)$ by Definition~\ref{def:tree-form}; thus by \eqref{eq:R}, the minimum of the first formula (of the equality) must be taken over some $c(Q_u)+c(Q_w)$ for $Q_u \in R_u(q_u,t_u), Q_w \in R_w(q_w,t_w)$, and $(q_u,q_w,t_u,t_w) \in J(q,t)$.  
\end{proof}

 \begin{figure} 	\label{fig:12}
	
	\hspace*{5cm} 
	\scalebox{1.1}{
		\begin{tikzpicture}

			\node[circle,draw,double, scale=0.6] (v) {$Q_v = Q_u \dot{\cup} Q_w$};
			\node[circle,draw,double,scale=0.7] [below left of=v, yshift=-25,xshift=-25] (u) {$Q_u =  \textsf{leaf}$};
			\node[circle,draw,double,scale=0.7] [below right of=v, yshift=-25,xshift=25] (w) {$Q_w =  \textsf{leaf}$};
			
			\draw[edge] (u) -- (v);
			\draw[edge] (w) -- (v);

			
			


		\end{tikzpicture}
	}	
	\caption{An example for the computation of $\textsf{tree}_v(q,t)$ for some (implicit) values $q$ and $t$, a vertex $v$ and the two childes (the leaves, note that the edges go reversely from the edges in the tree) $u,w$ of $v$. The subset whose cost brings $\textsf{tree}_v(q,t)$ to the minimum is $Q_v$; computing $Q_v$ is done by choosing appropriate subsets $Q_u, Q_w$ for each of the childes of $v$. Computing $Q_u$ and $Q_w$ follows using the tables $\textsf{leaf}_{Z(u)}$  and $\textsf{leaf}_{Z(w)}$, respectively.\label{fig:f}} 
\end{figure}

\comment{
 \begin{algorithm}[h]
	\caption{$\textsf{DP-tree}(S, N = (T,Z,d),c,p,v)$}
	\label{alg:leaf}
	\SetKwInOut{Input}{input}
	\SetKwInOut{Output}{output}
	
	
	
\If{$v$ \textnormal{is a leaf in $T$}}{
	
	Let $\ell \leftarrow Z(v)$ and compute $\textsf{leaf}_{\ell} \leftarrow \textsf{DP-leaf}(\ell,c,t)$. 
	
	\For{$q \in \{0,1,\ldots, |\ell|\}$}{
	
	\For{$t \in P_N$}{

	\eIf{$q> d(v)$}{
	
Set $\textsf{tree}_v(q,t) \leftarrow \infty$.

}{

Set $\textsf{tree}_v(q,t) \leftarrow \textnormal{\textsf{leaf}}_{\ell}\left(|\ell|,q,p\right)$.

}
	
}
	
}

Return $\textsf{tree}_v$. 
}

Find $u,w \in V, u \neq w$ such that $(v,u), (v,w) \in E$. 

Compute $\text{tree}_u \leftarrow \textsf{DP-tree}(S, N,c,p,u)$, $\text{tree}_w \leftarrow \textsf{DP-tree}(S, N,c,p,w)$. 

		\For{$q \in \{0,1,\ldots, |\ell|\}$}{
		
		\For{$t \in P_N$}{

			\eIf{$q> d(v)$}{
				
				Set $\textsf{tree}_v(q,t) \leftarrow \infty$.
				
			}{
				
				Set 	$\min_{(q_u,q_w,t_u,t_w) \in J(q,t)}  \left\{ \textnormal{\textsf{tree}}_u(q_u,t_u) + \textnormal{\textsf{tree}}_w(q_w,t_w) \right\}$. 
				
			}
			
		}
		
	}
	
	Return $\textsf{tree}_v$.

\end{algorithm}
}

The recursive formula of $\textnormal{\textsf{leaf}}_{\ell}(i,q,t)$ is computed by a bottom-up algorithm over the tree $T$. We start by computing the formula over the leaves of the tree, relying on Lemma~\ref{lem:tree-base}; then, using the recursive formula given in Lemma~\ref{lem:tree:step} we compute the remaining entry of the table $\textsf{tree}_v$ given the correct formula over the childes $u,w$ of $v$. We give the pseudocode in Algorithm~\ref{alg:tree} and the results of the algorithm are summarized in Lemma~\ref{lem:main-tree}.

\begin{algorithm}[h]
	\caption{$\textsf{DP-tree}(I,N)$}
	\label{alg:tree}
	\SetKwInOut{Input}{input}
	\SetKwInOut{Output}{output}
	
		\Input{ A BLM instance $I = (S,\cF,k,c,p,B)$, a tree form $N = (T,Z,d), T = (V,E)$ of $I$, and $v \in V$.}
			
	
		\Output{The table $\textsf{tree}_v$ as defined in Definition~\ref{def:vertex}.}
	
	Initialize a table $\textsf{tree}_v$ with $\infty$, with entries for every $q \in  \{0,1,\ldots, |Z(v)|\}$ and $t \in P_N$. 
	
	\If{$v$ \textnormal{is a leaf in $T$}}{
		
		Let $\ell \leftarrow Z(v)$ and compute $\textsf{leaf}_{\ell} \leftarrow \textsf{DP-leaf}(\ell,c,t)$.\label{step:c-leaf}
		
	For $q \in  \{0,1,\ldots, |\ell|\}$ and $t \in P_N$ set \begin{equation*}
		\textnormal{\textsf{tree}}_v(q,t) =
		\begin{cases}
		\textnormal{\textsf{leaf}}_{\ell}\left(|\ell|,q,t\right)~~ & \textnormal{if}\ q \leq d(v) \\
			\infty, ~~& \textnormal{otherwise}
		\end{cases}~~~~~~~~~~~~~~~~~~~~~~~~~~~~~~~~~~~~~~~~~~~~~~~~~~~~~~
	\end{equation*}\label{step:c-base}

		Return $\textsf{tree}_v$. 
	}

	Find $u,w \in V, u \neq w$ such that $(v,u), (v,w) \in E$. 
	
	Compute $\text{tree}_u \leftarrow \textsf{DP-tree}(I,N,u)$, $\text{tree}_w \leftarrow \textsf{DP-tree}(I, N,w)$. 
	
	For $q \in  \{0,1,\ldots, |\ell|\}$ and $t \in P_N$ set \begin{equation*}
		\textnormal{\textsf{tree}}_v(q,t) =
		\begin{cases}
			\min_{(q_u,q_w,t_u,t_w) \in J(q,t)}  \left\{ \textnormal{\textsf{tree}}_u(q_u,t_u) + \textnormal{\textsf{tree}}_w(q_w,t_w) \right\}, ~~ & \textnormal{if}\ q \leq d(v) \\
			\infty, ~~& \textnormal{otherwise}
		\end{cases}
	\end{equation*}\label{step:c-step}
	
	Return $\textsf{tree}_v$.

\end{algorithm}

	\begin{lemma}
	\label{lem:main-tree}
	Algorithm~\ref{alg:tree} returns in time $O({\left(|S| \cdot |P_N|\right)}^2 \cdot |V|)$ the table $\textnormal{\textsf{tree}}_{v}$. 
\end{lemma}

\begin{proof}
	For every $t \in V$, let $n(t)$ be the number of edges in a (directed) shortest path from $v$ to a leaf in $T$. We prove that for every $v \in V$ it holds that $\text{tree}_v = \textsf{DP-tree}(I, N,v)$ by induction on $n(v)$. For the base case, if $n(v) = 0$, then $v$ is a leaf in $T$. Then, the correctness follows by Lemma~\ref{lem:main-leaf} and Lemma~\ref{lem:tree-base}. Assume that for every vertex $t \in V$ such that $n(t)< n(v)$ it holds that $\text{tree}_t = \textsf{DP-tree}(I, N,t)$. In particular, because $T$ is a tree and $(v,u), (v,w) \in E$ it holds that $n(u), n(w) < n(v)$. Therefore, by the assumption of the induction, Step~\ref{step:c-step} and Lemma~\ref{lem:leaf-step} the claim follows.  
	For the running time, computing Step~\ref{step:c-leaf} takes $O(|S|^2 \cdot P_N)$ by Lemma~\ref{lem:main-leaf}. Moreover, Step~\ref{step:c-step} can be computed in time $O({\left(|S| \cdot P_N\right)}^2)$, since for each $q \in \{0,1, \ldots, |S|\}$ and $t \in P_N$ there are at most $|S| \cdot P_N$ options for choosing some $(q_u,q_w,t_u,t_w) \in J(q,t)$. Thus, since there are $|V|$ recursive calls to the algorithm, the running time is bounded by $O({\left(|S| \cdot P_N\right)}^2 \cdot |V|)$. 
\end{proof}

\section{An FPTAS for BLM}
\label{sec:FPTAS}

In this section, we combine the tree dynamic program from Section~\ref{sec:tree} to an FPTAS, leading to the proof of Theorem~\ref{thm:main}. Let $I = (S,\cF,k,c,p,B)$ be a BLM instance. We first compute a tree form for $I$. Then, to derive an FPTAS, note that one cannot simply compute the $\textsf{tree}$ dynamic program since the running time is not polynomial (see Lemma~\ref{lem:main-tree}). Therefore, we use standard profit-rounding techniques, where the profit of each item $e$ is rounded down to $\floor{\frac{p(e)}{\alpha}}$, where $\alpha = \frac{\eps \cdot \max_{e \in S} p(e)}{|S|}$. This creates a reduced instance $\bar{I}$ with rounded profits, on which computing the table $\textsf{tree}_r$ can be computed efficiently, where $r$ is the root of the constructed tree form. Then, by iterating over all possible values in $\textsf{tree}_r$, we can compute the value of the optimum for $\bar{I}$; this gives an {\em almost} optimal solution for $I$, where the solution itself is computed using standard backtracking.  The pseudocode of the algorithm is given in Algorithm~\ref{alg:FPTAS}. 


\begin{algorithm}[h]
	\caption{$\textsf{FPTAS}(I,\eps)$}
	\label{alg:FPTAS}
	\SetKwInOut{Input}{input}
	\SetKwInOut{Output}{output}
	
	\Input{A BLM instance $I = (S,\cF,k,c,p,B)$ and an error parameter $\eps>0$.}
	
	\Output{A solution $T$ of $I$ with profit $p(T) \geq (1-\eps) \cdot \OPT(I)$.}

	Compute $N \leftarrow \textsf{TreeForm}(I)$ and let $N = (T,Z,d)$.\label{step:treeForm}
	
Let $\alpha \leftarrow \frac{\eps \cdot \max_{e \in S} p(e)}{|S|}$ and define $\bar{p}(e) \leftarrow \floor{\frac{p(e)}{\alpha}} ~\forall e \in S$.\label{step:round}

Compute $\textsf{tree}_r \leftarrow \textsf{DP-tree}(\bar{I},N,r)$, where $\bar{I} = (S,\cF,k,c,\bar{p},B)$, and $r$ is the root of $T$.\label{step:table}

Let $\lambda \leftarrow  \bigg\{(q,t) \in \{0,\ldots, |S|\} \times \{0,\ldots, |S| \cdot \max_{e \in S} \bar{p}(e)\}~\bigg|~ \textsf{tree}_r(q,t) \leq B\bigg\}$.\label{step:lam1}

Find using backtracking a solution $T$ of $I$ of value $\max_{(q,t) \in \lambda} t$.\label{step:lam2}
\end{algorithm}

\noindent{\bf Proof of Theorem~\ref{thm:main}:}  Let $\bar{I} = (S,\cF,k,c,\bar{p},B)$ be the instance with the rounded profits. By Lemma~\ref{lem:tree-form} it holds that $N$ is a tree form of $I$. Note that the definition of tree from (Definition~\ref{def:tree-form}) is oblivious to the costs and profits; thus, $N$ is a tree form of $\bar{I}$ as well. In addition, by Lemma~\ref{lem:main-tree} it holds that $\textsf{DP-tree}(\bar{I},N,r)$ returns the table $\textsf{tree}_r$ as defined in Definition~\ref{def:vertex} (w.r.t. the instance $\bar{I}$). Consequently, by Definition~\ref{def:tree-form}, \eqref{eq:R}, Definition~\ref{def:vertex}, and the definition of $\lambda$, for all $(q,t) \in \lambda$ there is a solution of $\bar{I}$ of profit $t$ if and only if $\textsf{tree}_r(q,t) \leq B$. Thus, by the definition of $T$ and the above we have 
\begin{equation}
	\label{eq:mu}
	\bar{p}(T) = \max_{(q,t) \in \lambda} t = \OPT(\bar{I}) 
\end{equation} 

We now bound $\OPT(\bar{I})$ w.r.t. $\OPT(I)$. Let $T^*$ be an optimal solution of $I$. Then,

\begin{equation}
	\label{eq:opt}
	\begin{aligned}
		\bar{p}(T^*) = \sum_{e \in T^*} \floor{\frac{p(e)}{\alpha}} \geq  \sum_{e \in T^*} \left(\frac{p(e)}{\alpha}-1\right) \geq  \sum_{e \in T^*} \left(\frac{p(e)}{\alpha}\right) - |S| = \frac{\OPT(I)}{\alpha}-|S| 
	\end{aligned}
\end{equation} Hence, \begin{equation*}
\label{eq:f}
p(T) \geq \alpha \cdot \bar{p}(T) \geq \alpha \cdot \bar{p}(T^*) \geq \alpha \cdot \left(  \frac{\OPT(I)}{\alpha}-|S|  \right) = \OPT(I)-\eps \cdot \max_{e \in S} p(e) \geq (1-\eps) \cdot \OPT(I). 
\end{equation*} The first inequality holds by Step~\ref{step:round}. The second inequality holds by the optimality of $\bar{p}(T)$ by \eqref{eq:mu}. The third inequality holds by \eqref{eq:opt}. The last inequality holds since $\OPT(I) \geq \max_{e \in S} p(e)$ (assuming that $c(e) \leq B ~\forall e \in S$). 

We now analyze  the running time of the scheme. Step~\ref{step:treeForm} takes time $|I|^2$ by Lemma~\ref{lem:tree-form}. In addition, Step~\ref{step:round} takes linear time. Let $\bar{P}_N = \{0,1\ldots, |S| \cdot \max_{e \in S} \bar{p}(e)\}$. Then, the running time of Step~\ref{step:table} is $O({\left(|S| \cdot |\bar{P}_N |\right)}^2 \cdot |N|)$ by Lemma~\ref{lem:main-tree}. Finally, Step~\ref{step:lam1} and Step~\ref{step:lam2} can be computed in time $O(|S| \cdot |\bar{P}_N|)$ by the definition of $\lambda$. Hence, the overall running time can be bounded by $$O({\left(|S| \cdot |\bar{P}_N |\right)}^2 \cdot |N|) = O({\left(|S| \cdot |\bar{P}_N |\right)}^2 \cdot |I|) = O\left(|I^5| \cdot \eps^{-2}\right).$$
The first equality follows since $|N| = O(|I|)$ by Lemma~\ref{lem:tree-form}. The second equality follows since $|\bar{P}_N| = O(|S| \cdot \eps^{-1})$ by Step~\ref{step:round}. 
%

%

\section{Algorithm \textsf{TreeForm}}
\label{sec:form}

In this section we present Algorithm $\textsf{TreeForm}$ that computes a tree form of a given BLM instance  $I = (S,\cF,k,c,p,B)$, giving the proof of Lemma~\ref{lem:tree-form}. The algorithm is recursive. For the base case of the recursion, assume that the laminar family consists of a single set ($|\cF| = 1$). In this case, a tree form of a single vertex $X$ can be constructed by setting $Z(X) = X$ (the elements of in $X$) and the corresponding bound is $d(X) = k(X)$, where $\cF = \{X\}$. 

Now, assume that $|\cF| > 1$. Here, the algorithm identifies a set $X \in \cF$ that is not contained in any other set $Y \in \cF$. We create an artificial root $r$ for the tree with the entire set of elements $Z(r) = S$ and arbitrarily large bound $d(r) = \infty$. In addition, we create two childs for $r$. The first child corresponds to $X$. That is, the set of elements is defined as $U_X = \{e \in X~|~ e \notin Y~\forall Y \in \cF \setminus \{X\}\}$, which are all items that uniquely belong to $X$. Moreover, the bound for the vertex $X$ is $d(X) = k(X)$. The second child of $r$ and its subtree are computed recursively, for the instance $I_X = (S \setminus U_X,\cF \setminus \{X\},k,c,p,B)$. The pseudocode of the algorithm is given in Algorithm~\ref{alg:form}. 

\begin{algorithm}[h]
	\caption{$\textsf{TreeForm}(I)$}
	\label{alg:form}
	\SetKwInOut{Input}{input}
	\SetKwInOut{Output}{output}
	
	\Input{A BLM instance $I = (S,\cF,k,c,p,B)$.}
	
	\Output{A linear-size tree form $(T,Z,d)$ of $I$.}

	
		\If{
		$|\cF| = 1$
}{

 Let $\cF = \{X\}$ and set $Z(X) \leftarrow X$, $d(X) \leftarrow k(X)$, $T \leftarrow (\cF,\emptyset)$.\label{step:form-Base}
 
 Return $(T,Z,d)$.    

}

Create a new vertex $r$ and define $Z(r) \leftarrow S$, $d(r) \leftarrow \infty$. 
	
Find $X \in \cF$ such that for all $Y \in \cF$ it holds that $Y \subseteq X$ or $X \cap Y = \emptyset$.
 
Compute $(T_X,Z_X,d_X) \leftarrow \textsf{TreeForm}(I_X)$, where $I_X = (S \setminus U_X,\cF \setminus \{X\},k,c,p,B)$. 

Define $E \leftarrow E_X \cup \{(r,X), (r, r_X)\}$, where $T_X = (V_X,E_X)$ and $r_X$ is the root in $T_X$. 

Define $V \leftarrow V_X \cup \{r,X\}$. 

Define $Z(X) \leftarrow U_X, d(X) \leftarrow k(X), f(v) \leftarrow f_X(v)~ \forall v \in V_X, f \in \{Z,d\}$. 



Let $T \leftarrow (V,E)$ and return $(T,Z,d)$. 
\end{algorithm}

\noindent {\bf Proof of Lemma~\ref{lem:tree-form}:} There are at most $|\cF|$ recursive calls to the algorithm; moreover, each recursive call can be computed in linear time; therefore, the running time is $O(|I|)$. Moreover, in each recursive call we add to the constructed tree at most two vertices; thus, the size of the returned tree is linear. 

For the correctness, we prove that for every 
BLM instance  $I = (S,\cF,k,c,p,B)$ it holds that $(T,Z,d) = \textsf{TreeForm}(I)$ is a tree form of $I$. The proof is given by induction on $|\cF|$. Assume without the loss of generality that $|\cF| \geq 1$; otherwise, $S = \emptyset$ and the problem becomes trivial. For $|\cF| = 1$, by Step~\ref{step:form-Base} for every $A \subseteq S$ it holds that $A \in \cI$ if and only if $|A| = |A \cap X| = |A \cap Z(X)| \leq k(X) = d(X)$; thus, Condition~\ref{cond:t1} in Definition~\ref{def:tree-form} follows. Moreover, Condition~\ref{cond:t2}  in Definition~\ref{def:tree-form}  follows as a vacuous truth. Now, assume that for every BLM instance $I' = (S',\cF',k',c',p',B')$ such that $|\cF'| < |\cF|$ 
 it holds that $\textsf{TreeForm}(I')$ is a tree form of $I'$. Then, by the assumption of the induction it follows that  $(T_X,Z_X,d_X) \leftarrow \textsf{TreeForm}(I_X)$, where $I_X = (S \setminus U_X,\cF \setminus \{X\},k,c,p,B)$, is a tree form of $I_X$ since $\left| \cF \setminus \{X\}\right| < |\cF|$. We show that $(T,Z,d)$ satisfies the two conditions Definition~\ref{def:tree-form}.  \begin{enumerate}
	\item For all $A \subseteq S$ it holds that $A \in \cI_{\cF,k}$ if and only if for all $v \in V$ it holds that $|A \cap Z(v)| \leq d(v)$. Let $A \subseteq S$. 
	
	\noindent $\Rightarrow$ If $A \in \cI_{\cF,k}$, then by the definition of a laminar matroid, for all $Y \in \cF$ it holds that $|A \cap Y| \leq k(Y)$; thus, it holds that $|A \cap X| \leq k(X) = d(X)$. Moreover, observe that since $A \in \cI_{\cF,k}$ then it follows that $A \setminus U_X \in \cI_{\cF \setminus \{X\},k}$. Therefore, because $(T_X,Z_X,d_X)$ is a tree form of $I_X$ by the assumption of the induction and that $A \setminus U_X \in \cI_{\cF \setminus \{X\},k}$, then for every $v \in V_X$ it holds that $|A \cap Z(v)| = |(A \setminus U_X) \cap Z(v)| \leq d_X(v) = d(v)$ by Condition~\ref{cond:t1} of Definition~\ref{def:tree-form}.  Finally, it trivially holds that $|A \cap z(r)| \leq d(r) = \infty$. In summary, for all $v \in V$ it holds that $|A \cap Z(v)| \leq d(v)$.

		\noindent $\Leftarrow$ assume that for all $v \in V$ it holds that $|A \cap Z(v)| \leq d(v)$; in addition, recall that by the assumption of the induction we have that $(T_X,Z_X,d_X)$ is a tree form of $I_X$; thus, $A \setminus U_X \in \cI_{\cF \setminus \{X\},k}$ by Condition~\ref{cond:t1} of Definition~\ref{def:tree-form}. Moreover, it holds that $|A \cap X| = |A \cap Z(X)| \leq d(X) = k(X)$. Thus, by the above and the definition of $\cI_{\cF,k}$ it follows that $A \in \cI_{\cF,k}$. 
	\item For every $u,v,w \in V$, $(v,u), (v,w) \in E$ it holds that $Z(u) \cap Z(w) = \emptyset$ and $Z(v) = Z(u) ~\dot{\cup} ~Z(w)$. If $v = r$ and without the loss of generality $u = X$, $w = r_X$, then $Z(u) \cap Z(w) = U_X \cap \left( S \setminus U_X\right) = \emptyset$ and $Z(u) \dot{\cup} Z(w) = U_X \dot{\cup}  \left( S \setminus U_X\right) = S = Z(r) = Z(v)$. Otherwise, it holds that $u,v,w \in V_X$; in this case Condition~\ref{cond:t1} of Definition~\ref{def:tree-form} is satisfied by the assumption of the induction on $(T_X,Z_X,d_X)$. 
\end{enumerate}

}

\section{A Pseudo-polynomial Time Algorithm}
\label{sec:alg}

In this section we give a pseudo-polynomial time algorithm for BLM. Let $I = (S,\cF,k,c,p,B)$ be a BLM instance, and define $P_I = \{0,1,\ldots, |S| \cdot \max_{e \in S} p(e)\}$. Observe that for any solution $T \subseteq S$ it holds that $p(T) \in P_I$. 
Also, there may be $t \in P_I$ such that $t \neq p(T)$ for any $T \subseteq S$. Our algorithm computes the following table. 

\begin{definition}
	\label{def:DP}
	For any \textnormal{BLM} instance $I = (S,\cF,k,c,p,B)$, define the \textnormal{DP} table of $I$ as the function $\textnormal{\textsf{DP}}_I: \{0,....|S|\} \times P_I \to \mathbb{N}\cup \infty$ such that for all $q \in \{0,1,\ldots, |S|\}$ and $t \in P_{I}$, 
	$$\textnormal{\textsf{DP}}_I(q,t) = \min_{Q \in \cI_{\cF,k} \textnormal{ s.t. } |Q| = q,~ p(Q) = t} c(Q).$$ 
\end{definition} For any $q \in \{0,1,\ldots, |S|\}$ and $t \in P_{I}$, the entry $\textsf{DP}_I(q,t)$ gives the minimum cost of an independent set in $\cI_{\cF,k}$ of exactly $q$ elements and profit equal to $t$; if there is no such independent set then $\textsf{DP}_I(q,t) = \infty$. Also, $\textsf{DP}_I(q,t) = \infty$ if $q$ or $t$ do not belong to the domain of $\textsf{DP}_I$ (e.g., $q = |S|+1$). 

We formulate a dynamic program which computes the table $\textsf{DP}_I$. Informally, the table $\textsf{DP}_I$ is constructed by taking a maximal set $X \in \cF \setminus \{S\}$ in the laminar family (recall that a maximal set is not a subset of any other set in  $\cF \setminus \{S\}$). Our algorithm computes the sub-tables $\textsf{DP}_{I \cap X}$ and $\textsf{DP}_{I \setminus X}$ recursively. An important observation is that the instances $I \cap X$ and $I \setminus X$ are disjoint;
thus, the table $\textsf{DP}_I$ can be computed from $DP_{I\cap X}$ and $DP_{I\setminus X}$ using a convolution. 
Specifically, to compute an entry $\textsf{DP}_I(q,t)$ for $q \in  \{0,1,\ldots, |S|\}$ and $t \in P_I$, we find the minimum solution induced by partitioning the values $q$ and $t$ between the complementary sub-instances $I \cap X$ and $I \setminus X$. This is formalized by the next lemma. 

\begin{lemma}
	\label{lem:DP-step}

	Let $I = (S,\cF,k,c,p,B)$ be a \textnormal{BLM} instance, and $X \in \cF \setminus \{S\}$ a maximal set. 
	%
	%
	Then, for all $q \in  \{0,1,\ldots, |S|\}$ and $t \in P_I$,
	 \begin{equation*}
		\textnormal{\textsf{DP}}_I(q,t) =
		\begin{cases}
			\min_{\substack{ q_1,q_2,t_1,t_2 \in \mathbb{N} \textnormal{ s.t. } \\ q_1+q_2 = q, ~t_1+t_2 = t }}   \left\{ \textnormal{\textsf{DP}}_{I \cap X}(q_1,t_1) + \textnormal{\textsf{DP}}_{I \setminus X}(q_2,t_2) \right\} ~~ & \textnormal{if}\ q \leq k(S) \\
			\infty ~~& \textnormal{otherwise}
		\end{cases}
	\end{equation*}
\end{lemma}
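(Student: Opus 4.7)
The plan is to split the argument into the two cases in the piecewise definition and, in the main case, establish the equality by proving two matching inequalities. The statement is essentially a standard convolution identity, and the two observations already stated in the preliminaries (Observation~3.3, items~1 and~2) do essentially all the work of connecting independent sets of $I$ with pairs of independent sets of $I\cap X$ and $I\setminus X$.

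For the easy case $q>k(S)$: since $S\in\cF$, every $Q\in\cI_{\cF,k}$ satisfies $|Q|=|Q\cap S|\leq k(S)$, so no $Q\in\cI_{\cF,k}$ can have $|Q|=q$. Thus the minimum in Definition~4.1 is taken over the empty set and equals $\infty$, matching the formula.

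For the main case $q\leq k(S)$, I will prove equality via two directions. For the $(\leq)$ direction, fix any $q_1,q_2,t_1,t_2\in\mathbb{N}$ with $q_1+q_2=q$ and $t_1+t_2=t$ for which both right-hand values are finite, and let $Q_1\in\cI(I\cap X)$, $Q_2\in\cI(I\setminus X)$ be sets attaining $\textsf{DP}_{I\cap X}(q_1,t_1)$ and $\textsf{DP}_{I\setminus X}(q_2,t_2)$ respectively. Since $|Q_1\uplus Q_2|=q_1+q_2=q\leq k(S)$, Observation~3.3(1) yields $Q_1\uplus Q_2\in\cI(I)$; moreover $|Q_1\uplus Q_2|=q$, $p(Q_1\uplus Q_2)=t$, and $c(Q_1\uplus Q_2)=c(Q_1)+c(Q_2)$, so this set is a candidate in the minimum defining $\textsf{DP}_I(q,t)$. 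Taking the minimum over all such partitions gives the $(\leq)$ bound. For the $(\geq)$ direction, let $Q\in\cI_{\cF,k}$ attain $\textsf{DP}_I(q,t)$ (if the LHS is $\infty$ there is nothing to prove). Set $Q_1=Q\cap X$ and $Q_2=Q\setminus X$; by Observation~3.3(2) these are independent in $I\cap X$ and $I\setminus X$ respectively. Setting $q_i=|Q_i|$ and $t_i=p(Q_i)$ gives a valid partition, and since $Q=Q_1\uplus Q_2$ we have $c(Q)=c(Q_1)+c(Q_2)\geq\textsf{DP}_{I\cap X}(q_1,t_1)+\textsf{DP}_{I\setminus X}(q_2,t_2)$, which is at least the minimum over all partitions.

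The only subtlety — and the one place where care is needed — is the bookkeeping around $\infty$: one must check that when $\textsf{DP}_I(q,t)=\infty$ the RHS is also $\infty$, which follows because any finite RHS produces a concrete witness $Q_1\uplus Q_2$ via Observation~3.3(1), forcing the LHS to be finite. No obstacles beyond this routine case check are expected.
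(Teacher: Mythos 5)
Your proof is correct and follows essentially the same route as the paper: the paper's Claim~\ref{claim:1} is your $(\geq)$ direction (split an optimal $Q$ via Observation~\ref{obs:cap2}(2)) and its Claim~\ref{claim:2} is your $(\leq)$ direction (combine witnesses via Observation~\ref{obs:cap2}(1)), with the same case analysis for $q>k(S)$ and the $\infty$ bookkeeping.
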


\begin{proof}
	Let $q \in  \{0,1,\ldots, |S|\}$ and $t \in P_I$. For simplicity, 
	let $$M = \min_{\substack{ q_1,q_2,t_1,t_2 \in \mathbb{N} \textnormal{ s.t. } \\ q_1+q_2 = q, ~t_1+t_2 = t }}   \left\{ 	\textnormal{\textsf{DP}}_{I \cap X}(q_1,t_1) + \textnormal{\textsf{DP}}_{I \setminus X}(q_2,t_2) \right\}.$$ We first consider the case where $\textnormal{\textsf{DP}}_I(q,t)$ and $M$ differ from $\infty$. We use the following auxiliary claims. 	\begin{claim}
		\label{claim:1}
		If $\textnormal{\textsf{DP}}_I(q,t) \neq \infty$ then $M \leq \textnormal{\textsf{DP}}_I(q,t)$. 
	\end{claim}
	\begin{claimproof}
		As $\textnormal{\textsf{DP}}_I(q,t) \neq \infty$  there is $Q^* \in \cI_{\cF,k}$ such that
		\begin{equation}
			\label{eq:arg}
			Q^* =  \argmin_{Q \in \cI_{\cF,k} \textnormal{ s.t. } |Q| = q, ~p(Q) = t} c(Q).
		\end{equation}  Let $q^*_1 = |Q^* \cap X|, q^*_2 = |Q^* \setminus X|, t^*_1 = p(Q^* \cap X)$, and $t^*_2 = p(Q^* \setminus X)$. By \eqref{eq:arg}, and since $S = X \uplus (S \setminus X)$, it holds that $$|Q^*| =  |Q^* \cap X|+|Q^* \setminus X| = q^*_1+q^*_2 = q$$ and $$p(Q^*) = p(Q^* \cap X)+p(Q^* \setminus X) = t^*_1+t^*_2 = t.$$ Moreover, by Observation~\ref{obs:cap2}, since $Q^* \in \cI_{\cF,k}$, it holds that $Q^* \cap X \in \cI(I \cap X)$ and  $Q^* \setminus X \in \cI(I \setminus X)$ . Thus,  
	\begin{equation*}
		\label{eq:k2}
		\begin{aligned}
			\textnormal{\textsf{DP}}_I(q,t) =  c(Q^*) 
			= c(Q^* \cap X)+c(Q^* \setminus X) 
			\geq \textnormal{\textsf{DP}}_{I \cap X}(q^*_1,t^*_1) + \textnormal{\textsf{DP}}_{I \setminus X}(q^*_2,t^*_2) 
			\geq  M. 
		\end{aligned}
	\end{equation*} 
	The first inequality holds since $\textnormal{\textsf{DP}}_{I \cap X}(q^*_1,t^*_1)$ is the minimum cost of $Q \in \cI(I \cap X)$ such that $|Q| = q^*_1$ and $p(Q) = t^*_1$; since $Q^* \cap X$ satisfies these conditions (by Observation~\ref{obs:cap2}), we conclude that $c(Q^* \cap X) \geq \textnormal{\textsf{DP}}_{I \cap X}(q^*_1,t^*_1) $. Similar arguments show that $c(Q^* \setminus X) \geq \textnormal{\textsf{DP}}_{I \setminus X}(q^*_2,t^*_2)$. The second inequality holds since $M$ is the minimum value of $\textnormal{\textsf{DP}}_{I \cap X}(q_1,t_1) + \textnormal{\textsf{DP}}_{I \setminus X}(q_2,t_2)$ over all $q_1,q_2,t_1,t_2 \in \mathbb{N}$ such that $q_1+q_2 = q, t_1+t_2 = t$. As $q^*_1,q^*_2,t^*_1,t^*_2 \in \mathbb{N}$ and $q^*_1+q^*_2 = q, t^*_1+t^*_2 = t$, the inequality follows. 
	\end{claimproof}

	\begin{claim}
		\label{claim:2}
		If $M \neq \infty$ and $q \leq k(S)$ then $\textnormal{\textsf{DP}}_I(q,t) \leq M$. 
	\end{claim}
	\begin{claimproof}
		Let $q'_1,q'_2,t'_1,t'_2 \in \mathbb{N}, q'_1+q'_2 = q, t'_1+t'_2 = t$ such that  \begin{equation*}
			\label{eq:arg2}
			\textnormal{\textsf{DP}}_{I \cap X}(q'_1,t'_1) + \textnormal{\textsf{DP}}_{I \setminus X}(q'_2,t'_2) = M.
		\end{equation*} 
		Since $M \neq \infty$, 
		there exist
		\begin{equation}
			\label{eq:arg3}
			Q'_1 =  \argmin_{Q \in \cI(I \cap X) \textnormal{ s.t. } |Q| = q'_1, ~p(Q) = t'_1} c(Q),~~~~~~~~~~~~~~~~~	Q'_2 =  \argmin_{Q \in \cI(I \setminus X) \textnormal{ s.t. } |Q| = q'_2, ~p(Q) = t'_2} c(Q).
		\end{equation} 
		By \eqref{eq:arg3}, 
		\begin{equation*}
			\label{eq:k1}
			\begin{aligned}
				M ={} & \textnormal{\textsf{DP}}_{I \cap X}(q'_1,t'_1) + \textnormal{\textsf{DP}}_{I \setminus X}(q'_2,t'_2) \\
				={} & c(Q'_1)+c(Q'_2) \\
				={} & c(Q'_1 \uplus Q'_2) \\
				\geq{} & \min_{Q \in \cI_{\cF,k} \textnormal{ s.t. } |Q| = q, ~p(Q) = t} c(Q) \\
				={} & \textsf{DP}_I(q,t). 
			\end{aligned}
		\end{equation*}  
	The above inequality follows from the next arguments. As $q'_1+q'_2 = q$ and $q \leq k(S)$, by Observation~\ref{obs:cap2} $Q'_1 \uplus Q'_2 \in \cI_{\cF,k}$; also,  $|Q'_1 \uplus Q'_2| = q$, and $p(Q'_1 \uplus Q'_2) = t'_1+t'_2 = t$. The inequality follows, as the minimum in the right-hand side of the inequality considers $Q = Q'_1 \uplus Q'_2$ in particular. 
	\end{claimproof}

	To complete the proof of the lemma, we consider four complementary cases. \begin{itemize}
		\item 
		If $q> k(S)$ then $\textsf{DP}_I(q,t) = \infty$ by Definition~\ref{def:DP}.
		\item 
		If $M = \infty$ then $\textsf{DP}_I(q,t) = \infty$ by Claim~\ref{claim:1}.
		\item
		 If $ q  \leq k(S)$ and $\textsf{DP}_I(q,t) = \infty$ then by Claim~\ref{claim:2} $M = \infty$ . 
		\item 
		If $ q  \leq k(S)$ and $\textsf{DP}_I(q,t) \neq \infty$ then, by Claims~\ref{claim:1} and~\ref{claim:2}, $\textsf{DP}_I(q,t) = M$. 
	\end{itemize}
\end{proof}

Note that computing the table $\textsf{DP}_I$ by Lemma~\ref{lem:DP-step} is possible only if there is a maximal set $X \in \cF \setminus \{S\}$; this requires more than one set in the laminar family. If $|\cF| = 1$, we compute the table $\textsf{DP}_I$ using two alternative ways, depending on whether $|S| = 1$ or $|S|>1$. 
The next observation considers the case where the instance consists of a single element; it follows immediately from Definition~\ref{def:DP}. 

\begin{obs}
	\label{obs:baseCase}
	For a \textnormal{BLM} instance $I = (S,\cF,k,c,p,B)$ such that $|S| = 1$ it holds that $\textnormal{\textsf{DP}}_I(1,p(S)) = c(S)$,  $\textnormal{\textsf{DP}}_I(0,0) =  0$, and for any other $(q,t) \in \{0,\ldots, |S|\} \times P_I$ it holds that  $\textnormal{\textsf{DP}}_I(q,t) =  \infty$.  
\end{obs}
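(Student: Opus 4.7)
The plan is to handle the single-element case by direct enumeration of the independent sets, since the proof reduces to unpacking Definition~\ref{def:DP} for the restricted domain. Write $S = \{e\}$, so the only subsets to consider are $\emptyset$ and $\{e\}$. First I would verify that both subsets lie in $\cI_{\cF,k}$: the empty set trivially, and the singleton $\{e\}$ because $k: \cF \to \mathbb{N}_{>0}$ forces $k(X) \geq 1$ for every $X \in \cF$, while $|\{e\} \cap X| \leq 1$. Hence $\cI_{\cF,k} = \{\emptyset, \{e\}\}$, independent of the particular laminar structure.

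Having identified the full list of independent sets, the next step is to tabulate their cardinality, profit, and cost: $\emptyset$ has $(|\cdot|, p, c) = (0,0,0)$, whereas $\{e\}$ has $(1, p(S), c(S))$. Plugging these into Definition~\ref{def:DP} then yields the three cases. The pair $(q,t) = (0,0)$ is achieved only by $Q = \emptyset$, so the minimum cost is $0$; the pair $(1, p(S))$ is achieved only by $Q = \{e\} = S$, giving $c(S)$; every other $(q,t) \in \{0,1\} \times P_I$ corresponds to no independent set at all, so the minimum is taken over an empty collection and evaluates to $\infty$ by the convention stated after Definition~\ref{def:DP}.

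There is no real obstacle here, as the observation is essentially a direct reading of the definition in the degenerate case. The only small subtlety worth being explicit about is the claim that $\{e\} \in \cI_{\cF,k}$, which relies on the codomain of $k$ being the positive integers and thus holds uniformly over all valid BLM instances with $|S| = 1$. Once this point is made, the three assertions of the observation are immediate from Definition~\ref{def:DP}.
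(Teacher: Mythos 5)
Your proof is correct and matches the paper's reasoning: the paper offers no explicit proof, stating only that the observation follows immediately from Definition~\ref{def:DP}, and your enumeration of $\cI_{\cF,k}=\{\emptyset,\{e\}\}$ (using $k:\cF\to\mathbb{N}_{>0}$ to justify $\{e\}\in\cI_{\cF,k}$) together with the empty-minimum-equals-$\infty$ convention is exactly the intended argument, spelled out.
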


We now consider the case where 
$|S|>1$ 
and $\cF = \{S\}$. Here, we define a new instance which adds a (redundant) partition of S into two subsets. This partition allows us to use the recursive computation as
given in Lemma~\ref{lem:DP-step}. 
For a \textnormal{BLM} instance $I = (S,\cF,k,c,p,B)$ such that $\cF = \{S\}$ and $|S|>1$, we say that the BLM instance $\tilde{I} = (S,\tilde{\cF},\tilde{k},c,p,B)$ is a {\em partitioned-instance} of $I$ if the following holds. \begin{itemize}
	\item $\tilde{\cF} = \{S_1,S_2,S\}$ where $S_1,S_2$ is a partition of $S$. 
	
	\item $\tilde{k}: \tilde{\cF} \rightarrow \mathbb{N}$ and $\tilde{k}(S_1) = \tilde{k}(S_2) = \tilde{k}(S) = k(S)$. 
\end{itemize}

Note that $S_1 ,S_2 \neq \emptyset$ since $|S|>1$. The next lemma states that the independent sets of an instance $I$ and 
of a partitioned-instance of $I$ are identical. 
\begin{lemma}
	\label{lem:auxP}
	For any \textnormal{BLM} instance $I =  (S,\cF,k,c,p,B)$ such that $\cF = \{S\}$ and $|S|>1$, and a partitioned-instance $\tilde{I} = (S,\tilde{\cF},\tilde{k},c,p,B)$ of $I$ it holds that $\cI_{\cF,k} = \cI_{\tilde{\cF},\tilde{k}}$. 
\end{lemma}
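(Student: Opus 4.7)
The plan is to establish the set equality $\cI_{\cF,k}=\cI_{\tilde\cF,\tilde k}$ by a direct double inclusion, unpacking the definition of independent sets for laminar matroids in each case. The key observation I will exploit is that since $\tilde k(S_1)=\tilde k(S_2)=\tilde k(S)=k(S)$, the constraints associated with $S_1$ and $S_2$ in the partitioned-instance are redundant given the constraint for $S$; formally, for any $A\subseteq S$, $|A\cap S_i|\le |A\cap S|\le k(S)$ for $i\in\{1,2\}$.

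First, I would recall that $A\in \cI_{\cF,k}$ is equivalent to the single inequality $|A|\le k(S)$, since $\cF=\{S\}$. Similarly, $A\in \cI_{\tilde\cF,\tilde k}$ is equivalent to the conjunction $|A\cap S_1|\le k(S)$, $|A\cap S_2|\le k(S)$, and $|A\cap S|\le k(S)$.

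For the inclusion $\cI_{\cF,k}\subseteq \cI_{\tilde\cF,\tilde k}$, I would take $A\in\cI_{\cF,k}$, so $|A|\le k(S)$, and simply note that $|A\cap S_i|\le |A|\le k(S)$ for $i=1,2$, and $|A\cap S|=|A|\le k(S)$, giving $A\in \cI_{\tilde\cF,\tilde k}$. For the reverse inclusion $\cI_{\tilde\cF,\tilde k}\subseteq \cI_{\cF,k}$, I would take $A\in \cI_{\tilde\cF,\tilde k}$ and use directly the constraint $|A|=|A\cap S|\le k(S)$, which places $A$ in $\cI_{\cF,k}$.

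There is no real obstacle here: the statement is essentially a sanity check that the artificially introduced partition $\{S_1,S_2\}$ with the same cardinality cap as $S$ does not restrict the matroid further. The only thing to be mildly careful about is to cite $\cF=\{S\}$ so that the independence condition reduces to a single inequality, and to verify the trivial monotonicity $|A\cap S_i|\le|A\cap S|$. The entire argument is a couple of lines of elementary set reasoning.
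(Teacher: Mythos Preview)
Your proposal is correct and follows essentially the same approach as the paper: a direct double inclusion showing that the constraints on $S_1$ and $S_2$ are redundant because $|A\cap S_i|\le |A|\le k(S)=\tilde k(S_i)$, and conversely using only the constraint $|A|=|A\cap S|\le \tilde k(S)=k(S)$.
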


\begin{proof}
	Let $\tilde{\cF} = \{S_1,S_2,S\}$ and
	 $Q \in \cI_{\cF,k}$. Then $|Q| \leq k(S) = \tilde{k}(S)$, and
	 $$|Q \cap S_1|  \leq |Q| \leq k(S) = \tilde{k}(S_1).$$
	 Similarly, $ |Q \cap S_2| \leq \tilde{k}(S_2)$. Thus, $Q \in \cI_{\tilde{\cF},\tilde{k}}$. For the other direction, let 
	$ \tilde{Q} \in \cI_{\tilde{\cF},\tilde{k}}$. 
	Then, $|\tilde{Q}| \leq \tilde{k}(S) = k(S)$ and it follows that $\tilde{Q} \in \cI_{\cF,k}$. We conclude that  $\cI_{\cF,k} = \cI_{\tilde{\cF},\tilde{k}}$.
\end{proof}

The next result follows immediately from Lemma~\ref{lem:auxP} and  Definition~\ref{def:DP}. 

	\begin{cor}
	\label{lem:partitioned}
	For any \textnormal{BLM} instance $I$ such that $\cF = \{S\}$ and $|S|>1$, and a partitioned-instance $\tilde{I}$ of $I$, it holds that $\textnormal{\textsf{DP}}_{\tilde{I}} =\textnormal{\textsf{DP}}_I$. 
\end{cor}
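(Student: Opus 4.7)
The plan is to observe that the corollary is essentially a direct substitution into Definition~\ref{def:DP}. Both $I = (S,\cF,k,c,p,B)$ and a partitioned-instance $\tilde{I} = (S,\tilde{\cF},\tilde{k},c,p,B)$ share the same ground set $S$, the same cost function $c$, the same profit function $p$, and the same budget $B$. In particular, the domain of the table, $\{0,\ldots,|S|\} \times P_I$, is identical for both instances, since $P_I$ depends only on $S$ and $p$, and is therefore equal to $P_{\tilde{I}}$.

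Given this, the first step I would take is to fix arbitrary $q \in \{0,\ldots,|S|\}$ and $t \in P_I$ and expand both $\textnormal{\textsf{DP}}_I(q,t)$ and $\textnormal{\textsf{DP}}_{\tilde{I}}(q,t)$ using Definition~\ref{def:DP}. This yields
\begin{equation*}
\textnormal{\textsf{DP}}_I(q,t) = \min_{Q \in \cI_{\cF,k} \textnormal{ s.t. } |Q| = q,~p(Q)=t} c(Q), \qquad \textnormal{\textsf{DP}}_{\tilde{I}}(q,t) = \min_{Q \in \cI_{\tilde{\cF},\tilde{k}} \textnormal{ s.t. } |Q| = q,~p(Q)=t} c(Q).
\end{equation*}
The two expressions differ only in the set of independent sets over which the minimum is taken.

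The second step is to invoke Lemma~\ref{lem:auxP}, which asserts $\cI_{\cF,k} = \cI_{\tilde{\cF},\tilde{k}}$ under the hypotheses of the corollary. Since the minimization in both displays is over the same family of subsets, using the same cost function $c$ and the same constraints $|Q|=q$ and $p(Q) = t$, the two minima must coincide (with the convention that both equal $\infty$ when no such $Q$ exists). This yields $\textnormal{\textsf{DP}}_I(q,t) = \textnormal{\textsf{DP}}_{\tilde{I}}(q,t)$ for every $(q,t)$, which is exactly the claim.

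I do not anticipate a real obstacle here: the work is entirely concentrated in Lemma~\ref{lem:auxP}, and once that equality of independent-set families is available, the corollary is a mechanical unfolding of Definition~\ref{def:DP}. The only care needed is to verify that the domains of the two tables agree, which follows immediately from the fact that $S$ and $p$ are preserved when passing from $I$ to $\tilde{I}$.
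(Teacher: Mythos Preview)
Your proposal is correct and follows exactly the approach indicated in the paper, which simply states that the corollary follows immediately from Lemma~\ref{lem:auxP} and Definition~\ref{def:DP}. You have unpacked this immediate deduction carefully (including the observation that $P_I = P_{\tilde I}$ so the domains agree), but the substance is identical.
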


\comment{
\begin{proof}
Let $\tilde{\cF} = \{S_1,S_2,S\}$. In addition, let $q \in \{0,1,\ldots, |S|\}$ and $t \in P_{I}$. If $\textsf{DP}_I(q,t) = c(Q)$ for some $Q \in \cI_{\cF,k}$ such that $|Q| = q$ and $p(Q) = t$, then $|Q| \leq k(S)$; therefore, it holds that $|Q| \leq k(S) = \tilde{k}(S)$, $$|Q \cap S_1|  \leq |Q| \leq k(S) = \tilde{k}(S_1),$$ and similarly $ |Q \cap S_2| \leq \tilde{k}(S_2) = \tilde{k}(S)$. Thus, $Q \in \cI_{\tilde{\cF},\tilde{k}}$, $|Q| = q$, and $p(Q) = t$. By Definition~\ref{def:DP} it follows that 
\begin{equation}
		\label{eq:t1}
		\textsf{DP}_{\tilde{I}}(q,t) \leq c(Q) = \textsf{DP}_I(q,t).
\end{equation} 
	Similarly, if $\textsf{DP}_{\tilde{I}}(q,t) = c(\tilde{Q})$ for some $ \tilde{Q} \in \cI_{\tilde{\cF},\tilde{k}}$ such that $|\tilde{Q}| = q$ and $p(\tilde{Q}) = t$, then $|\tilde{Q}| \leq \tilde{k}(S) = k(S)$; therefore, it holds that $\tilde{Q} \in \cI_{F,k}$, $|\tilde{Q}| = q$, and $p(\tilde{Q}) = t$. By Definition~\ref{def:DP} it follows that 
	\begin{equation}
		\label{eq:t2}
	\textsf{DP}_{I}(q,t) \leq c(\tilde{Q}) = \textsf{DP}_{\tilde{I}}(q,t).
	\end{equation} By \eqref{eq:t1} and \eqref{eq:t2}, it holds that $\textsf{DP}_{I}(q,t) \neq \infty $ if and only if $\textsf{DP}_{\tilde{I}}(q,t) \neq \infty$. Hence, either $\textsf{DP}_{I}(q,t) = \textsf{DP}_{\tilde{I}}(q,t) = \infty$ or $\textsf{DP}_{I}(q,t) \leq \textsf{DP}_{\tilde{I}}(q,t) \leq \textsf{DP}_{I}(q,t)$ using \eqref{eq:t1} and \eqref{eq:t2}. We conclude that $\textsf{DP}_{I}(q,t) = \textsf{DP}_{\tilde{I}}(q,t)$ and the proof follows. 
\end{proof}
}

	\begin{figure}[htbp]
	\hspace*{0.25cm}                                                           
	\includegraphics[scale=0.33]{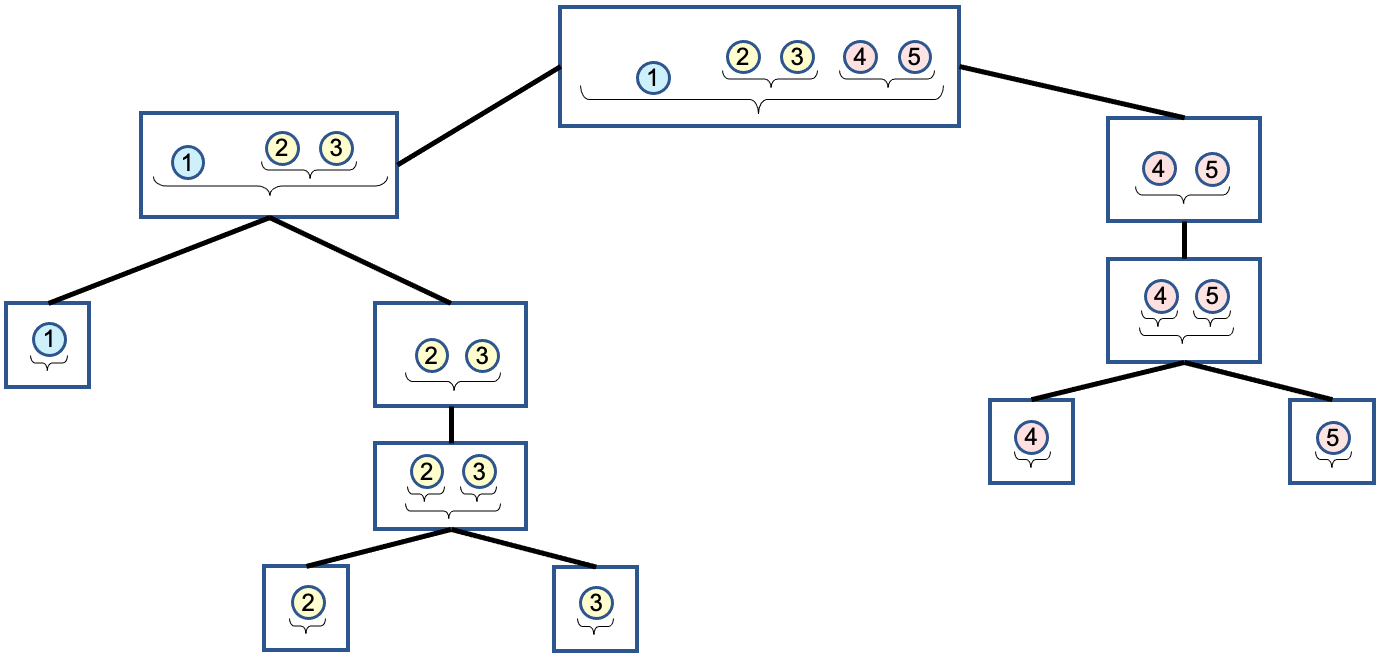}%
	\hspace{1mm}%
	\caption{An illustration of the recursive calls of Algorithm~\ref{alg:DP} for a BLM instance $I$ with five elements $S = \{1,2,3,4,5\}$, represented by the circles. 
The underbraces indicate the sets in the laminar family $\cF$ of the instance in each node of the tree. For example, in the root node, 	
$\cF = \{S,\{2,3\},\{4,5\}\}$.}
\label{fig:1}
\end{figure}

Using the above, we derive a pseudo-polynomial time algorithm which computes $\textsf{DP}_I$. If there is one element, the algorithm computes $\textsf{DP}_I$ using Observation~\ref{obs:baseCase}; otherwise, if there is one set $\cF = \{S\}$ in the laminar family, the algorithm computes $\textsf{DP}_I$ by a recursive call to the algorithm with a partitioned-instance. The remaining case is that there exists a maximal set $X \in \cF \setminus \{S\}$ for which we can apply the recursive computation of the table via Lemma~\ref{lem:DP-step}. We give an illustration of recursive calls the algorithm initiates in Figure~\ref{fig:1}. The pseudocode of the algorithm is given in Algorithm~\ref{alg:DP}.  For a BLM instance $I$, we use $d(I)$ to denote the recursion depth in the execution of $\textsf{ComputeDP}\left(I\right)$. 

\begin{algorithm}[h]
	\caption{$\textsf{ComputeDP}(I)$}
	\label{alg:DP}
	\SetKwInOut{Input}{input}
	\SetKwInOut{Output}{output}
	
	\Input{ A BLM instance $I = (S,\cF,k,c,p,B)$.}

\Output{The table $\textsf{DP}_I$ as defined in Definition~\ref{def:DP}.}

Initialize $\textnormal{\textsf{DP}}_I(q,t) =\infty$
for all $q \in  \{0,1,\ldots, |S|\}$ and $t \in P_I$.\label{set:initialize}

\If{$|S| = 1$}{

Set $\textnormal{\textsf{DP}}_I(1,p(S)) \leftarrow c(S)$ and  $\textnormal{\textsf{DP}}_I(0,0) \leftarrow 0$.\label{step:set0}

Return the table $\textnormal{\textsf{DP}}_I$. 

}

\If{$|\cF| = 1$}{
	
	 
	Return $\textsf{ComputeDP}\left(\tilde{I}\right)$ where $\tilde{I}$ is a partitioned-instance of $I$.\label{step:partitioned}
}


Find a maximal set $X \in \cF \setminus \{S\}$ 
.\label{step:find}

Compute $\textsf{DP}_{I \cap X} \leftarrow \textsf{ComputeDP}(I \cap X)$ and $\textsf{DP}_{I \setminus X} \leftarrow \textsf{ComputeDP}(I \setminus X)$.\label{step:comp}

For $q \in  \{0,1,\ldots, |S|\}$ and $t \in P_I$ set 
\begin{equation*}
	\textnormal{\textsf{DP}}_I(q,t) =
	\begin{cases}
	 \min_{\substack{ q_1,q_2,t_1,t_2 \in \mathbb{N} \textnormal{ s.t. } \\ q_1+q_2 = q, t_1+t_2 = t }}  \left\{ \textnormal{\textsf{DP}}_{I \cap X}(q_1,t_1) + \textnormal{\textsf{DP}}_{I \setminus X}(q_2,t_2) \right\} ~~ & \textnormal{if}\ q \leq k(S) \\
		\infty ~~& \textnormal{otherwise}
	\end{cases}
\end{equation*}\label{step:cc-step}

Return the table $\textnormal{\textsf{DP}}_I$. 
\end{algorithm}

\begin{lemma}
	\label{lem:main-DP}
	For any \textnormal{BLM} instance $I = (S,\cF,k,c,p,B)$ Algorithm~\ref{alg:DP} returns 
	the table $\textnormal{\textsf{DP}}_{I}$. 
\end{lemma}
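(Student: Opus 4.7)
The plan is strong induction on the measure $\mu(I) = 2|S| + \one[\cF = \{S\} \text{ and } |S| > 1]$, which I will show strictly decreases in every recursive call Algorithm~\ref{alg:DP} initiates. For the base case $|S| = 1$, correctness follows immediately from Observation~\ref{obs:baseCase}: Step~\ref{step:set0} writes the two nontrivial entries $\textsf{DP}_I(1,p(S)) = c(S)$ and $\textsf{DP}_I(0,0) = 0$, while every other entry retains the value $\infty$ assigned in Step~\ref{set:initialize}.

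For the inductive step I would split on whether $|\cF| \geq 2$. In that case Step~\ref{step:find} locates a maximal set $X \in \cF \setminus \{S\}$; by Observation~\ref{obs:cap} both $I \cap X$ and $I \setminus X$ are BLM instances, and since $X$ is a nonempty proper subset of $S$ each has a ground set of size strictly less than $|S|$, hence measure at most $2(|S|-1)+1 < \mu(I)$. Applying the inductive hypothesis to the recursive calls in Step~\ref{step:comp} and then plugging into Lemma~\ref{lem:DP-step} yields that Step~\ref{step:cc-step} fills $\textsf{DP}_I$ correctly, with entries for $q > k(S)$ left at $\infty$ as required by Definition~\ref{def:DP}. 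When instead $|\cF| = 1$ and $|S| > 1$, Step~\ref{step:partitioned} recurses on a partitioned-instance $\tilde{I}$ having $|\tilde{S}| = |S|$ and $|\tilde{\cF}| = 3$, so $\mu(\tilde{I}) = 2|S| < 2|S|+1 = \mu(I)$; the inductive hypothesis then gives $\textsf{DP}_{\tilde{I}}$ from the recursive call, and Corollary~\ref{lem:partitioned} identifies this with $\textsf{DP}_I$.

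The only subtle part is picking the measure. A plain induction on $|S|$ would stall on the partitioned-instance branch because the ground set is not shrunk there; the fix is to exploit the fact that a partitioned-instance always produces $|\tilde{\cF}| = 3$, taking the next recursive call out of the $\cF = \{S\}$ regime. The indicator term in $\mu$ encodes exactly this observation, and once the measure is in place the rest of the argument is routine bookkeeping on top of Observation~\ref{obs:baseCase}, Corollary~\ref{lem:partitioned}, and Lemma~\ref{lem:DP-step}.
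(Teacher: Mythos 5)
Your proof is correct and follows essentially the same route as the paper's: the same three-way case split, handled respectively by Observation~\ref{obs:baseCase}, Corollary~\ref{lem:partitioned}, and Lemma~\ref{lem:DP-step}. The only difference is the induction variable --- the paper inducts on the recursion depth $d(I)$ (which implicitly presupposes termination), whereas your explicit measure $\mu(I)$ yields termination and correctness in one stroke; this is a slightly cleaner piece of bookkeeping but not a different argument.
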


\begin{proof}
	We show that for every BLM instance $I = (S,\cF,k,c,p,B)$ it holds that $\textsf{ComputeDP}\left(I\right) = \textsf{DP}_I$. The proof is by induction on $d(I)$. 
	For the base case, let $I$ be a BLM instance such that $d(I) = 1$. Then, it holds that $|S| = 1$ and the algorithm returns $\textsf{DP}_I$ by Observation~\ref{obs:baseCase} and Steps~\ref{set:initialize}, \ref{step:set0}
	of Algorithm~\ref{alg:DP}. For some $n \in \mathbb{N}_{>0}$, assume that for every BLM instance $I'$ for which $d(I') \leq n$, it holds that $\textsf{ComputeDP}\left(I'\right) = \textsf{DP}_{I'}$. For the induction step, let $I = (S,\cF,k,c,p,B)$ be a BLM instance such that $d(I) = n+1$. We consider two cases.
	\begin{enumerate}
	
		\item  $|\cF| = 1$. Then Algorithm~\ref{alg:DP} returns $\textsf{ComputeDP}\left(\tilde{I}\right)$ by Step~\ref{step:partitioned}, where $\tilde{I}$ is a partitioned-instance of $I$. By the induction hypothesis, it holds that $\textsf{ComputeDP}\left(\tilde{I}\right) = \textsf{DP}_{\tilde{I}}$. Thus, by Corollary~\ref{lem:partitioned}, we have $\textsf{ComputeDP}\left(I\right)  = \textsf{ComputeDP}\left(\tilde{I}\right) = \textsf{DP}_{I}$.

		\item $|\cF|>1$. 
		In Step~\ref{step:find} in the (non-recursive) computation of  $\textsf{ComputeDP}\left(I\right)$, the algorithm finds a maximal set $X \in \cF \setminus \{S\}$. 
		By the assumption of the induction it follows that  $\textsf{ComputeDP}(I \cap X) = \textsf{DP}_{I \cap X}$ and $ \textsf{ComputeDP}(I \setminus X) = \textsf{DP}_{I \setminus X}$. Therefore, by Step~\ref{step:cc-step} and Lemma~\ref{lem:DP-step} it holds that $\textsf{ComputeDP}\left(I\right) = \textsf{DP}_{I}$. 
	\end{enumerate}

\end{proof}

\comment{
\begin{proof}
We show that for every BLM instance $I = (S,\cF,k,c,p,B)$ it holds that $\textsf{ComputeDP}\left(I\right) = \textsf{DP}_I$; the proof is by induction on the recursion depth in the computation of $\textsf{ComputeDP}\left(I\right)$. 
For the base case, let $I$ be a BLM instance such that the recursion depth is $1$. Then, it holds that $|S| = 1$ and the algorithm returns $\textsf{DP}_I$ by Observation~\ref{obs:baseCase} and steps~\ref{set:initialize}, \ref{step:set0}.  For some $n \in \mathbb{N}_{>0}$, assume that for every BLM instance $I'$ for which the computation of $\textsf{ComputeDP}\left(I'\right)$ reaches a recursion depth at most $n$, it holds that $\textsf{ComputeDP}\left(I'\right) = \textsf{DP}_{I'}$. For the step of the induction, let $I = (S,\cF,k,c,p,B)$ be a BLM instance such that the computation of $\textsf{ComputeDP}\left(I'\right)$ reaches a recursion depth of $n+1$. We consider two cases.
\begin{enumerate}
	
		 \item  $|\cF| = 1$. Then, the algorithm returns $\textsf{ComputeDP}\left(\tilde{I}\right)$ by Step~\ref{step:partitioned}, where $\tilde{I}$ is a partitioned-instance of $I$. By the induction hypothesis, it holds that $\textsf{ComputeDP}\left(\tilde{I}\right) = \textsf{DP}_{\tilde{I}}$. Thus, by Corollary~\ref{lem:partitioned} it follows that $\textsf{ComputeDP}\left(\tilde{I}\right) = \textsf{DP}_{I}$ as required.

		 \comment{  ; note that there is a partitioned-instance of $I$ since $|S|>n\geq 1$ and $|\cF| = 1$. 
		 Since $\tilde{I}$ is a partitioned-instance of $I$, in Step~\ref{step:find} in the (non-recursive) computation of  $\textsf{ComputeDP}\left(\tilde{I}\right)$, the algorithm finds $X \in \tilde{\cF} \setminus \{S\}$ such that $X \not\subseteq G$ for all $G \in \tilde{\cF} \setminus \{S,X\}$. Moreover, it holds that $|X| < |S| = n+1$ and $|S \setminus X| < |S| = n+1$. By the assumption of the induction it follows that  $\textsf{ComputeDP}(I \cap X) = \textsf{DP}_{I \cap X}$ and $ \textsf{ComputeDP}(I \setminus X) = \textsf{DP}_{I \setminus X}$. Therefore, }
	 
		
		\item $|\cF|>1$. Since $\cF$ is a laminar family, in Step~\ref{step:find} in the (non-recursive) computation of  $\textsf{ComputeDP}\left(I\right)$, the algorithm finds a maximal set $X \in \cF \setminus \{S\}$. 
		By the induction hypothesis, it follows that  $\textsf{ComputeDP}(I \cap X) = \textsf{DP}_{I \cap X}$, and $ \textsf{ComputeDP}(I \setminus X) = \textsf{DP}_{I \setminus X}$. Therefore, by Step~\ref{step:cc-step} and Lemma~\ref{lem:DP-step}, we have $\textsf{ComputeDP}\left(I\right) = \textsf{DP}_{I}$. 
\end{enumerate}


\end{proof}
}

For the running time of Algorithm~\ref{alg:DP}, assume that the laminar family is represented by a linked list of sets, and that the elements in each set are represented by a bit map.

\begin{lemma}
	\label{lem:main-DPTIME}
	For any \textnormal{BLM} instance $I = (S,\cF,k,c,p,B)$, the running time of Algorithm~\ref{alg:DP} on $I$ is $O(|S|^3 \cdot |P_I|^2)$.
\end{lemma}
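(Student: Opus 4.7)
The plan is to separately bound (i) the work performed at a single (non-recursive) invocation of Algorithm~\ref{alg:DP} and (ii) the total number of recursive calls produced by $\textsf{ComputeDP}(I)$, and then multiply. For a sub-instance $I' = (S',\cF',k',c',p',B')$ that arises in the recursion, note that $S' \subseteq S$ and $P_{I'} \subseteq P_I$ (since $|P_{I'}| \le |S'| \cdot \max_{e \in S'} p(e) + 1 \le |P_I|$). Using the linked-list/bitmap representation, initializing the table $\textsf{DP}_I$ in Step~\ref{set:initialize} costs $O(|S'| \cdot |P_{I'}|)$, locating a maximal $X$ in Step~\ref{step:find} costs $O(|\cF'|) = O(|S'|)$ (as any laminar family on $|S'|$ elements has size $O(|S'|)$), and constructing a partitioned-instance in Step~\ref{step:partitioned} costs $O(|S'|)$. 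The dominant cost is the convolution in Step~\ref{step:cc-step}: for each $(q,t) \in \{0,\ldots,|S'|\} \times P_{I'}$ we minimize over $(q_1,t_1) \in \{0,\ldots,|S'|\} \times P_{I'}$ (with $(q_2,t_2)$ determined), giving $O(|S'|^2 \cdot |P_{I'}|^2) \le O(|S|^2 \cdot |P_I|^2)$ time per call.

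It remains to bound the number of recursive calls, which I would do by analyzing the recursion tree of $\textsf{ComputeDP}(I)$. Classify each node by the branch of the algorithm it executes: a \emph{leaf} corresponds to the base case $|S'|=1$, a \emph{passthrough} corresponds to $|\cF'|=1,\, |S'|>1$ and makes one recursive call on a partitioned-instance, and a \emph{branching node} corresponds to $|\cF'|>1$ and makes two recursive calls on $I' \cap X$ and $I' \setminus X$, whose element sets partition $S'$. The key observation is that the child of a passthrough node is always a branching node, since the partitioned-instance $\tilde{I}$ satisfies $|\tilde{\cF}|=3>1$, so passthroughs cannot chain.

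Since every branching node strictly partitions its element set into two nonempty parts, and every passthrough merely preserves its element set before being split at the next level, the leaves of the recursion tree induce a partition of $S$ into singletons; hence there are exactly $|S|$ leaves. By standard full-binary-tree arithmetic (applied after contracting passthrough edges), there are at most $|S|-1$ branching nodes. Finally, since distinct passthrough nodes have distinct branching-node children (passthroughs never chain), the number of passthroughs is also at most $|S|$. Altogether the recursion tree has $O(|S|)$ nodes.

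Combining the two bounds yields a total running time of $O(|S|) \cdot O(|S|^2 \cdot |P_I|^2) = O(|S|^3 \cdot |P_I|^2)$, as claimed. The main subtle point I expect is the argument that passthroughs do not chain and cannot accumulate beyond one per branching node; once this is established, the rest is a direct multiplication of the per-call cost by the size of the recursion tree.
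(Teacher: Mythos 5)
Your proposal is correct and follows essentially the same route as the paper: bound the per-invocation work by $O(|S|^2\cdot|P_I|^2)$ (dominated by the convolution in Step~\ref{step:cc-step}), and bound the size of the recursion tree by $O(|S|)$ using exactly the paper's classification of calls into base cases, partitioned-instance calls, and two-child branching calls, together with the observation that a partitioned-instance call is immediately followed by a branching call. The paper's Claim~\ref{claim:auxh2} makes this count explicit as $3\cdot|S|$, but the argument is the same.
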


\begin{proof}
		 We use the next claim. 
	\begin{claim}
	\label{claim:auxh2}
	For any \textnormal{BLM} instance $I = (S,\cF,k,c,p,B)$, Algorithm~\ref{alg:DP} makes at most $3 \cdot |S|$ recursive calls during the execution of $\textnormal{\textsf{ComputeDP}}(I)$.
\end{claim}
\begin{claimproof}
	Consider the tree of recursive calls generated throughout the execution of 
	$\textnormal{\textsf{ComputeDP}}(I)$. Each element in $S$ has a unique leaf in the tree; therefore, the number of leaves is bounded by $|S|$. Moreover, since there are $|S|$ leaves, the number of internal nodes in the tree that have two children is bounded by $|S|-1$; thus, the number of recursive calls initiated in Step~\ref{step:comp} is at most $|S|$. Finally, 
	after each recursive call from Step~\ref{step:partitioned} the algorithm applies a recursive call from Step~\ref{step:comp}.  Therefore, the number of recursive calls from Step~\ref{step:partitioned} is at most $|S|$. Overall, Algorithm~\ref{alg:DP} makes at most $3 \cdot |S|$ recursive calls.
\end{claimproof}

To complete the proof, we show that the running time of the algorithm, excluding the recursive calls, is $O(|S|^2 \cdot |P_I|^2)$. As the recursive calls in the algorithm use instances in which the number of elements is bounded by $|S|$, and the set of profits is of size at most $P_I$, the statement of the lemma follows from Claim~\ref{claim:auxh2}.
 Computing Step~\ref{set:initialize} takes $O(|S| \cdot |P_I|)$. Moreover, Step~\ref{step:partitioned} can be computed in time $O(|S|)$ using an arbitrary partition of the elements. Step~\ref{step:find} can be computed in time $O(|S|)$ by iterating over all sets in the laminar family $\cF$. Also, computing each entry in the table $\textsf{DP}_I$ in Step~\ref{step:cc-step} takes $O(|S| \cdot |P_I|)$; thus, computing the entire table takes $O(|S|^2 \cdot |P_I|^2)$. Overall, the running time is $O(|S|^3 \cdot |P_I|^2)$. 
\end{proof}

\comment{
\section{The Tree Dynamic Program}
\label{sec:tree}

	\begin{claimproof}
	We prove the claim by induction on the recursion depth in the computation of $\textsf{ComputeDP}\left(I\right) = \textsf{DP}_I$. 
	For the base case, let $I$ be a BLM instance such that the recursion depth is $1$. Then, it holds that $|S| = 1$ and the algorithm does not do recursive calls. For some $n \in \mathbb{N}_{>0}$, assume that for every BLM instance $I' = (S',\cF',k',c',p',B')$ for which the computation of $\textsf{ComputeDP}\left(I'\right)$ reaches a recursion depth at most $n$, it holds that Algorithm~\ref{alg:DP} executes at most $2 \cdot |S'|$ recursive calls in the computation of $\textnormal{\textsf{ComputeDP}}(I')$. Let $I = (S,\cF,k,c,p,B)$ be a BLM instance such that the computation of $\textsf{ComputeDP}\left(I\right)$ reaches a recursion depth $n+1$. We consider two cases.
	\begin{enumerate}
		\item $|\cF|>1$.  Let $S(I)$ be the set of elements of a BLM instance $I$. The algorithm finds a maximal set $X \in \cF \setminus \{S\}$ and performs two recursive calls with the instances $I \cap X$ and $I \setminus X$. Therefore, by the assumption of the induction, the number of recursive calls to the algorithm is bounded by $|S(I \cap X)|+|S(I \setminus X)| = |S|$. 
		
		\item  $|\cF| = 1$. Then, the algorithm returns $\textsf{ComputeDP}\left(\tilde{I}\right)$ by Step~\ref{step:partitioned}, where $\tilde{I} =  (S,\tilde{\cF},\tilde{k},c,p,B)$ is a partitioned-instance of $I$. For the partitioned-instance $\tilde{I}$, in the non-recursive computation of $\textsf{ComputeDP}\left(\tilde{I}\right)$ the algorithm finds a maximal set $\tilde{X} \in \tilde{\cF}$ and computes to recursive calls with the instances $\tilde{I} \cap \tilde{X}$ and $\tilde{I} \setminus \tilde{X}$. Therefore, by the assumption of the induction, the number of recursive calls to the algorithm is bounded by $$1+|S(\tilde{I} \cap \tilde{X})|+|S(\tilde{I} \setminus \tilde{X})| = 1+|S(\tilde{I})| = 1+|S| \leq 2 \cdot |S|.$$ 
		The $1$ in the left hand side of the equation follows from the recursive call to $\textsf{ComputeDP}\left(\tilde{I}\right)$.

	\end{enumerate}
	
\end{claimproof}

We define a dynamic program that computes the most profitable solution given a tree form of a given instance. Specifically, let $I = (S,\cF,k,c,p,B)$ be a BLM instance and let $N = (T,Z,d)$, $T = (V,E)$, be a tree form of $I$. We assume that these variables are fixed throughout this section. Let $P_{N} = \{0,1,\ldots, |S| \cdot  \max_{e \in S} p(e)\}$ be all non-negative integers up to an upper bound on the profit of any subset of elements in $S$. Also, for $v \in V$ let $\Delta(v)$ be the vertices of the subtree of $T$ with root $v$ (recall that $T$ is directed); finally, for $v \in V$, $q \in \{0,1,\ldots, |S|\}$, and $t \in P_{N}$ let \begin{equation}
	\label{eq:R}
	R_v(q,t) = \left\{G \subseteq \bigcup_{u \in \Delta(v)} Z(u)~\Bigg|~|G| = q, p(G) = t, |G \cap Z(v)| \leq d(u)~\forall u \in \Delta(v)\right\}
\end{equation} be all subsets of elements corresponding to vertices within the subtree of $v$ (i.e., $\Delta(v)$) of cardinality $q$, profit $t$, which satisfy the cardinality bounds of all vertices in $\Delta(v)$. Intuitively, $R_v(q,t)$ describes the set of feasible solutions within $\Delta(v)$, with specified parameters for cardinality and profit. Now, we define the optimal set, w.r.t. cost, within $R_v(q,t)$: 

\begin{definition}
	\label{def:vertex}
	for $v \in V$, $q \in \{0,1,\ldots, |Z(v)|\}$, and $t \in P_{N}$ define 
	$\textnormal{\textsf{tree}}_v(q,t) = \min_{Q \in R_v(q,t)} c(Q).$
\end{definition}

Next, we describe a bottom-up approach to compute the table $\textsf{tree}_v$, for every $v \in V$. The base case relates to the leaves of the tree $T$. Consider some leaf $v \in V$ and let $\ell = Z(v)$; obviously, $\textnormal{\textsf{tree}}_{\ell}(q,t)$ is infeasible ($ = \infty$) if the cardinality $q$ is more than the bound $d(v)$ (see \eqref{eq:R}). Otherwise, the formula of $\textsf{tree}_v$ converges with a corresponding entry in the table $\textsf{leaf}_{\ell}$, which is computed in the previous section. Specifically, 
\begin{lemma}
	\label{lem:tree-base}
	Given 
	a leaf $v \in V$, $q \in \{0,1,\ldots, |Z(v)|\}$, and $t \in P_{N}$, the following holds. \begin{enumerate}
		\item If $q > d(v)$, then $\textnormal{\textsf{tree}}_v(q,t) = \infty$.\label{cond:tree-base1}
		\item Otherwise, it holds that $\textnormal{\textsf{tree}}_v(q,t) = \textnormal{\textsf{leaf}}_{\ell}\left(|\ell|,q,t\right)$, where $\ell = Z(v)$. 
	\end{enumerate}
\end{lemma}

\begin{proof}
	By \eqref{eq:R} and Definition~\ref{def:vertex}, if $q>d(v)$ then $\textnormal{\textsf{tree}}_v(q,t) = \infty$ since each subset violates the cardinality bound of $v$. Otherwise, again by \eqref{eq:R} and Definition~\ref{def:vertex}, it holds that $\textnormal{\textsf{tree}}_v(q,t)$ is the minimum cost of some $Q \subseteq Z(v), |Q| = q, p(Q) = t$. Therefore, by Definition~\ref{def:leaf} it follows that $\textnormal{\textsf{tree}}_v(q,t) = \textnormal{\textsf{leaf}}_{\ell}\left(|\ell|,q,t\right)$, where $\ell = Z(v)$. 
\end{proof}

We now describe the recursive formula of the table $\textsf{tree}_v$ for a non-leaf vertex $v \in V$. Consider the two childes $u,w$ of $v$ in $T$. Recall that $T$ is a tree form of the instance; thus, the elements sets $Z(u), Z(w)$ must be disjoint. Therefore, the recursive formula is achieved by finding the minimum cost of total cardinality $q$ and profit $t$ over a disjoint union over a subset from $Z(u)$ and a subset of $Z(w)$. These optimal subsets of $u$ and $w$ can be computed recursively, in their corresponding subtrees. For an example, see Figure~\ref{fig:f}. Formally, given $q,t \in \mathbb{N}$, let $J(q,t) = \{(q_1,q_2,t_1,t_2) \in \mathbb{N}^4~|~q_1+q_2 = q, t_1+t_2 = t\}$; this is the set of all partitions of $q$ and $t$ to two positive integers, summing to $q$ and $t$, respectively. Then,

\begin{lemma}
	\label{lem:tree:step}
	Given 
	$u,v,w \in V$ such that $(v,u), (v,w) \in E$, $q \in \{0,1,\ldots, |Z(v)|\}$, and $t \in P_{N}$, then \begin{equation*}
		\textnormal{\textsf{tree}}_v(q,t) =
		\begin{cases}
			\min_{(q_u,q_w,t_u,t_w) \in J(q,t)}  \left\{ \textnormal{\textsf{tree}}_u(q_u,t_u) + \textnormal{\textsf{tree}}_w(q_w,t_w) \right\}, ~~ & \textnormal{if}\ q \leq d(v) \\
			\infty, ~~& \textnormal{otherwise}
		\end{cases}
	\end{equation*}
\end{lemma}

\begin{proof}
	If $q > d(v)$ then by \eqref{eq:R} and Definition~\ref{def:vertex} we have that $\textnormal{\textsf{tree}}_v(q,t) = \infty$. Otherwise, \begin{equation*}
		\label{eq:v1}
		\begin{aligned}
			\textnormal{\textsf{tree}}_v(q,t) ={} &  \min_{Q \in R_v(q,t)} c(Q) \\ 
			={} & \min_{(q_u,q_w,t_u,t_w) \in J(q,t)} \left\{ \min_{Q_u \in R_u(q_u,t_u)} c(Q_u)+ \min_{Q_w \in R_w(q_w,t_w)} c(Q_w) \right\} \\
			={} & \min_{(q_u,q_w,t_u,t_w) \in J(q,t)}   \textnormal{\textsf{tree}}_u(q_u,t_u) + \textnormal{\textsf{tree}}_w(q_w,t_w). 
		\end{aligned}
	\end{equation*} The first equality and the last equality follow by Definition~\ref{def:vertex}. The second equality follows since $Z(u) \cap Z(w) = \emptyset$ and $Z(v) = Z(u) \uplus Z(w)$ by Definition~\ref{def:tree-form}; thus by \eqref{eq:R}, the minimum of the first formula (of the equality) must be taken over some $c(Q_u)+c(Q_w)$ for $Q_u \in R_u(q_u,t_u), Q_w \in R_w(q_w,t_w)$, and $(q_u,q_w,t_u,t_w) \in J(q,t)$.  
\end{proof}

\begin{figure} 	\label{fig:12}
	
	\hspace*{5cm} 
	\scalebox{1.1}{
		\begin{tikzpicture}

			\node[circle,draw,double, scale=0.6] (v) {$Q_v = Q_u \dot{\cup} Q_w$};
			\node[circle,draw,double,scale=0.7] [below left of=v, yshift=-25,xshift=-25] (u) {$Q_u =  \textsf{leaf}$};
			\node[circle,draw,double,scale=0.7] [below right of=v, yshift=-25,xshift=25] (w) {$Q_w =  \textsf{leaf}$};
			
			\draw[edge] (u) -- (v);
			\draw[edge] (w) -- (v);

			
			


		\end{tikzpicture}
	}	
	\caption{An example for the computation of $\textsf{tree}_v(q,t)$ for some (implicit) values $q$ and $t$, a vertex $v$ and the two childes (the leaves, note that the edges go reversely from the edges in the tree) $u,w$ of $v$. The subset whose cost brings $\textsf{tree}_v(q,t)$ to the minimum is $Q_v$; computing $Q_v$ is done by choosing appropriate subsets $Q_u, Q_w$ for each of the childes of $v$. Computing $Q_u$ and $Q_w$ follows using the tables $\textsf{leaf}_{Z(u)}$  and $\textsf{leaf}_{Z(w)}$, respectively.\label{fig:f}} 
\end{figure}

\comment{
\begin{algorithm}[h]
\caption{$\textsf{DP-tree}(S, N = (T,Z,d),c,p,v)$}
\label{alg:leaf}
\SetKwInOut{Input}{input}
\SetKwInOut{Output}{output}



\If{$v$ \textnormal{is a leaf in $T$}}{
	
	Let $\ell \leftarrow Z(v)$ and compute $\textsf{leaf}_{\ell} \leftarrow \textsf{DP-leaf}(\ell,c,t)$. 
	
	\For{$q \in \{0,1,\ldots, |\ell|\}$}{
		
		\For{$t \in P_N$}{

			\eIf{$q> d(v)$}{
				
				Set $\textsf{tree}_v(q,t) \leftarrow \infty$.
				
			}{
				
				Set $\textsf{tree}_v(q,t) \leftarrow \textnormal{\textsf{leaf}}_{\ell}\left(|\ell|,q,p\right)$.
				
			}
			
		}
		
	}
	
	Return $\textsf{tree}_v$. 
}

Find $u,w \in V, u \neq w$ such that $(v,u), (v,w) \in E$. 

Compute $\text{tree}_u \leftarrow \textsf{DP-tree}(S, N,c,p,u)$, $\text{tree}_w \leftarrow \textsf{DP-tree}(S, N,c,p,w)$. 

\For{$q \in \{0,1,\ldots, |\ell|\}$}{
	
	\For{$t \in P_N$}{

		\eIf{$q> d(v)$}{
			
			Set $\textsf{tree}_v(q,t) \leftarrow \infty$.
			
		}{
			
			Set 	$\min_{(q_u,q_w,t_u,t_w) \in J(q,t)}  \left\{ \textnormal{\textsf{tree}}_u(q_u,t_u) + \textnormal{\textsf{tree}}_w(q_w,t_w) \right\}$. 
			
		}
		
	}
	
}

Return $\textsf{tree}_v$.

\end{algorithm}
}

The recursive formula of $\textnormal{\textsf{leaf}}_{\ell}(i,q,t)$ is computed by a bottom-up algorithm over the tree $T$. We start by computing the formula over the leaves of the tree, relying on Lemma~\ref{lem:tree-base}; then, using the recursive formula given in Lemma~\ref{lem:tree:step} we compute the remaining entry of the table $\textsf{tree}_v$ given the correct formula over the childes $u,w$ of $v$. We give the pseudocode in Algorithm~\ref{alg:tree} and the results of the algorithm are summarized in Lemma~\ref{lem:main-tree}.

\begin{algorithm}[h]
\caption{$\textsf{DP-tree}(I,N)$}
\label{alg:tree}
\SetKwInOut{Input}{input}
\SetKwInOut{Output}{output}

\Input{ A BLM instance $I = (S,\cF,k,c,p,B)$, a tree form $N = (T,Z,d), T = (V,E)$ of $I$, and $v \in V$.}


\Output{The table $\textsf{tree}_v$ as given in Definition~\ref{def:vertex}.}

Initialize a table $\textsf{tree}_v$ with $\infty$, with entries for every $q \in  \{0,1,\ldots, |Z(v)|\}$ and $t \in P_N$. 

\If{$v$ \textnormal{is a leaf in $T$}}{

Let $\ell \leftarrow Z(v)$ and compute $\textsf{leaf}_{\ell} \leftarrow \textsf{DP-leaf}(\ell,c,t)$.\label{step:c-leaf}

For $q \in  \{0,1,\ldots, |\ell|\}$ and $t \in P_N$ set \begin{equation*}
\textnormal{\textsf{tree}}_v(q,t) =
\begin{cases}
\textnormal{\textsf{leaf}}_{\ell}\left(|\ell|,q,t\right)~~ & \textnormal{if}\ q \leq d(v) \\
\infty, ~~& \textnormal{otherwise}
\end{cases}~~~~~~~~~~~~~~~~~~~~~~~~~~~~~~~~~~~~~~~~~~~~~~~~~~~~~~
\end{equation*}\label{step:c-base}

Return $\textsf{tree}_v$. 
}

Find $u,w \in V, u \neq w$ such that $(v,u), (v,w) \in E$. 

Compute $\text{tree}_u \leftarrow \textsf{DP-tree}(I,N,u)$, $\text{tree}_w \leftarrow \textsf{DP-tree}(I, N,w)$. 

For $q \in  \{0,1,\ldots, |\ell|\}$ and $t \in P_N$ set \begin{equation*}
\textnormal{\textsf{tree}}_v(q,t) =
\begin{cases}
\min_{(q_u,q_w,t_u,t_w) \in J(q,t)}  \left\{ \textnormal{\textsf{tree}}_u(q_u,t_u) + \textnormal{\textsf{tree}}_w(q_w,t_w) \right\}, ~~ & \textnormal{if}\ q \leq d(v) \\
\infty, ~~& \textnormal{otherwise}
\end{cases}
\end{equation*}\label{step:c-step}

Return $\textsf{tree}_v$.

\end{algorithm}

\begin{lemma}
\label{lem:main-tree}
Algorithm~\ref{alg:tree} returns in time $O({\left(|S| \cdot |P_N|\right)}^2 \cdot |V|)$ the table $\textnormal{\textsf{tree}}_{v}$. 
\end{lemma}

\begin{proof}
For every $t \in V$, let $n(t)$ be the number of edges in a (directed) shortest path from $v$ to a leaf in $T$. We prove that for every $v \in V$ it holds that $\text{tree}_v = \textsf{DP-tree}(I, N,v)$ by induction on $n(v)$. For the base case, if $n(v) = 0$, then $v$ is a leaf in $T$. Then, the correctness follows by Lemma~\ref{lem:main-leaf} and Lemma~\ref{lem:tree-base}. Assume that for every vertex $t \in V$ such that $n(t)< n(v)$ it holds that $\text{tree}_t = \textsf{DP-tree}(I, N,t)$. In particular, because $T$ is a tree and $(v,u), (v,w) \in E$ it holds that $n(u), n(w) < n(v)$. Therefore, by the assumption of the induction, Step~\ref{step:c-step} and Lemma~\ref{lem:leaf-step} the claim follows.  
For the running time, computing Step~\ref{step:c-leaf} takes $O(|S|^2 \cdot P_N)$ by Lemma~\ref{lem:main-leaf}. Moreover, Step~\ref{step:c-step} can be computed in time $O({\left(|S| \cdot P_N\right)}^2)$, since for each $q \in \{0,1, \ldots, |S|\}$ and $t \in P_N$ there are at most $|S| \cdot P_N$ options for choosing some $(q_u,q_w,t_u,t_w) \in J(q,t)$. Thus, since there are $|V|$ recursive calls to the algorithm, the running time is bounded by $O({\left(|S| \cdot P_N\right)}^2 \cdot |V|)$. 
\end{proof}

}

\section{An FPTAS for BLM}
\label{sec:FPTAS}

In this section we use the dynamic program in Section~\ref{sec:alg} to derive an FPTAS for BLM, leading to the proof of Theorem~\ref{thm:main}. Let $I = (S,\cF,k,c,p,B)$ be a BLM instance and let $\eps>0$ be an error parameter. 
Note that the computation time of $\textsf{DP}_I$ depends on $|P_I|$, may not be polynomial in the input size. To obtain a polynomial-time algorithm, we round down the profit of each item $e$ to $\floor{\frac{p(e)}{\alpha}}$, where $\alpha = \frac{\eps \cdot \max_{e \in S} p(e)}{|S|}$. This generates a reduced instance $\bar{I}$,
for which the table $\textsf{DP}_I$ can be computed efficiently. 
Then, by iterating over all possible values in $\textsf{DP}_I$, we compute the value of the optimum for $\bar{I}$; this gives an {\em almost} optimal solution for $I$, where the solution itself is computed using standard backtracking. The pseudocode of the algorithm is given in Algorithm~\ref{alg:FPTAS}. 

\begin{algorithm}[h]
	\caption{$\textsf{FPTAS}(I,\eps)$}
	\label{alg:FPTAS}
	\SetKwInOut{Input}{input}
	\SetKwInOut{Output}{output}
	
	\Input{A BLM instance $I = (S,\cF,k,c,p,B)$ and an error parameter $\eps>0$.}
	
	\Output{A solution $T$ of $I$ with profit $p(T) \geq (1-\eps) \cdot \OPT(I)$.}

	
	Let $\alpha \leftarrow \frac{\eps \cdot \max_{e \in S} p(e)}{|S|}$ and define $\bar{p}(e) \leftarrow \floor{\frac{p(e)}{\alpha}} ~\forall e \in S$.\label{step:round}

	Compute $\textsf{DP}_{\bar{I}} \leftarrow \textsf{ComputeDP}(\bar{I})$, where $\bar{I} = (S,\cF,k,c,\bar{p},B)$.\label{step:table}
	
	Let $\lambda \leftarrow  \bigg\{(q,t) \in \{0,\ldots, |S|\} \times P_{\bar{I}}~\bigg|~ \textsf{DP}_{\bar{I}}(q,t) \leq B\bigg\}$.\label{step:lam1}
	
	Use backtracking to find a solution $T$ for $I$ of value $\max_{(q,t) \in \lambda} t$.\label{step:lam2}
\end{algorithm}

\noindent{\bf Proof of Theorem~\ref{thm:main}:}  We show that Algorithm~\ref{alg:FPTAS} is an FPTAS for BLM. Let $\bar{I} = (S,\cF,k,c,\bar{p},B)$ be the instance with the rounded profits as given in Algorithm~\ref{alg:FPTAS}. 
By Lemma~\ref{lem:main-DP}, $\textsf{ComputeDP}(\bar{I})$ returns the table $\textsf{DP}_{\bar{I}}$ as given in Definition~\ref{def:DP}. 
 Thus, for all $(q,t) \in \lambda$ there is a solution for $\bar{I}$ of profit $t$ if and only if $\textsf{DP}_{\bar{I}}(q,t) \leq B$. 
 Let $T$ be an optimal solution for $\bar{I}$. By the above, we have  
\begin{equation}
	\label{eq:mu}
	\bar{p}(T) = \max_{(q,t) \in \lambda} t = \OPT(\bar{I}) 
\end{equation} 

Let $T^*$ be an optimal solution of $I$. Then,

\begin{equation}
	\label{eq:opt}
	\begin{aligned}
		\bar{p}(T^*) = \sum_{e \in T^*} \floor{\frac{p(e)}{\alpha}} \geq  \sum_{e \in T^*} \left(\frac{p(e)}{\alpha}-1\right) \geq  \sum_{e \in T^*} \left(\frac{p(e)}{\alpha}\right) - |S| = \frac{\OPT(I)}{\alpha}-|S|. 
	\end{aligned}
\end{equation} 
Hence, 
\begin{equation*}
	\label{eq:f}
	p(T) \geq \alpha \cdot \bar{p}(T) \geq \alpha \cdot \bar{p}(T^*) \geq \alpha \cdot \left(  \frac{\OPT(I)}{\alpha}-|S|  \right) = \OPT(I)-\eps \cdot \max_{e \in S} p(e) \geq (1-\eps) \cdot \OPT(I). 
\end{equation*} 
The first inequality holds by Step~\ref{step:round}. The second inequality follows from the optimality of $\bar{p}(T)$ by \eqref{eq:mu}. The third inequality holds by \eqref{eq:opt}. The last inequality holds since $\OPT(I) \geq \max_{e \in S} p(e)$ (assuming that $c(e) \leq B ~\forall e \in S$). 

We now analyze  the running time of the scheme. 
We note that Step~\ref{step:round} takes linear time. 
The running time of Step~\ref{step:table} is $O(|S|^3 \cdot |P_{\bar{I}}|^2)$ by Lemma~\ref{lem:main-DPTIME}. Finally, Steps~\ref{step:lam1} and~\ref{step:lam2} can be computed in time $O(|S| \cdot|P_{\bar{I}}|)$ by the definition of $\lambda$. Hence, the overall running time is $$O\left(|S|^3 \cdot |P_{\bar{I}}|^2\right) = O\left(|I^5| \cdot \eps^{-2}\right).$$
The equality follows since $|\bar{P}_{\bar{I}}| = O(|S| \cdot \eps^{-1})$ by Step~\ref{step:round}. \hfill \qedsymbol

\section{Discussion}
\label{sec:discussion}

In this paper we showed that the budgeted laminar matroid independent set (BLM) problem admits an FPTAS, thus improving upon the existing EPTAS for this matroid family, and generalizing the FPTAS for the special cases of cardinality constrained knapsack and multiple-choice knapsack. Our FPTAS is based on a natural dynamic program which utilizes the tree-like structure of laminar matroids. It seems that 
with slight modifications our scheme yields an FPTAS for the more general problem of 
budgeted $k$-laminar matroid independent set, where $k \in \mathbb{N}$ is fixed.\footnote{For a definition of $k$-laminar matroids see, e.g.~\cite{fife2019generalized}.}

An intriguing open question is whether BMI admits an FPTAS on other families of matroids, such as graphic matroids, transversal matroids, or linear matroids. 

We note that the running time of our scheme is $O(|I|^5 \cdot \eps^{-2})$, whereas the running time of the state of the art FPTAS for knapsack is $O(|I| + \eps^{-2.2})$ \cite{deng2023approximating}, which almost matches the lower bound of $\Tilde{O}\left((|I| + \eps^{-1})^{2-o(1)}\right)$ for the problem \cite{cygan2019problems}. 
It would be interesting to design an FPTAS for BLM that matches the running time of \cite{deng2023approximating}, or to obtain a stronger lower bound for this problem.

\comment{Compare the running time of the algorithm to the state of art result and lower bound for knapsack. Mention that it is not clear if a stronger lower bound can be obtained for BLM, or if one can obtain a running time which matches that of the knapsack problem.
	
	Mention that the lower bound (no fptas) for budgeted matroid IS does not rule out the existence of FPTAS for other classes of matroids. For example, graph matroids and linear matroids. Mention the existence of an fptas for those special cases of budgeted matroid independent set is open. Are there classes of matroids which are generalizations of laminar matroids?
	
	I think we can define a generalization of laminar matroids for which our algorithm may work. The idea is to think of laminar families as families of set defined by graphs of tree-width 1. What if we increase the tree-width to some bounded tree-width? It may be worth formalizing the definitions and writing things down.}
\bibliographystyle{splncs04}
\bibliography{bibfile}
\end{document}